\newtheorem{thm}{Theorem}[section]
\newtheorem{lem}[thm]{Lemma}
\newtheorem{prop}[thm]{Proposition}
\newtheorem{cor}{Corollary}
\newtheorem{assum}{Assumption}[section]
\newtheorem{cond}{Condition}[section]
\newtheorem{rem}{Remark}
\def\P{{\mathbb P}}
\def\E{{\mathbb E}}
\def\R{\mathbb R}
\def\Z{\mathbb Z}
\renewcommand\vec[1]{{\bm #1}}
\def\BibTeX{{\rm B\kern-.05em{\sc i\kern-.025em b}\kern-.08em
    T\kern-.1667em\lower.7ex\hbox{E}\kern-.125emX}}
\newcommand{\linebreakand}{%
      \end{@IEEEauthorhalign}
      \hfill\mbox{}\par
      \mbox{}\hfill\begin{@IEEEauthorhalign}
}
\begin{document}

\title{Detection of Sparse Anomalies in \\ High-Dimensional Network Telescope Signals\\
\thanks{This work was partially supported by the National Science Foundation under awards CNS-1823192 and CNS-2120400.}
}

\author{\IEEEauthorblockN{Rafail Kartsioukas}
\IEEEauthorblockA{
rkarts@umich.edu}
\and
\IEEEauthorblockN{Rajat Tandon}
\IEEEauthorblockA{
rajattan@usc.edu}
\and
\IEEEauthorblockN{Zheng Gao}
\IEEEauthorblockA{
gaozheng@umich.edu }\\
\and
\IEEEauthorblockN{Jelena Mirkovic}
\IEEEauthorblockA{
mirkovic@isi.edu }
\and
\IEEEauthorblockN{Michalis Kallitsis}
\IEEEauthorblockA{
mgkallit@merit.edu}
\and
\IEEEauthorblockN{Stilian Stoev}
\IEEEauthorblockA{
sstoev@umich.edu}
}

\maketitle

\begin{abstract}
Network operators are
increasingly overwhelmed with incessant cyber-security threats,
ranging from malicious network reconnaissance to attacks such 
as distributed denial of service and data breaches.
A large number of these attacks could be prevented if the network
operators were better equipped
with threat intelligence information that would allow them 
to block nefarious scanning activities. Network telescopes or ``darknets" offer
a unique window into observing Internet-wide scanners and other malicious
entities, and they could offer early warning signals to operators
that would be critical for infrastructure protection.
Hence, monitoring network telescopes for timely detection of 
coordinated and heavy scanning activities is an important, albeit challenging, task. 
The challenges mainly arise due to the non-stationarity and the dynamic nature
of Internet traffic and, more importantly, the fact that one needs to monitor high-dimensional signals (e.g., all TCP/UDP
ports) to search for ``sparse" anomalies. We propose statistical methods to address both challenges
in an efficient and ``online" manner; our work is validated both with synthetic data and with
real-world data from a large network telescope.
\end{abstract}

\begin{IEEEkeywords}
Network telescope, Internet scanning, anomaly detection, sparse signal support recovery.
\end{IEEEkeywords}

\section{Introduction}

The Internet has evolved into a complex ecosystem, comprised
of a plethora of network-accessible services and end-user devices
that are frequently mismanaged, not properly maintained and secured,
and outdated with untreated software vulnerabilities.  Adversaries
are increasingly becoming aware of this ill-secure Internet 
landscape and leverage it to their advantage for launching attacks against critical infrastructure. 
Examples abound: foraging for mismanaged NTP and DNS open resolvers (or other UDP-based services)
is a well-known attack vector that can be exploited to incur volumetric reflection-and-amplification
distributed denial of service (DDoS) attacks~\cite{10.1145/2663716.2663717, AmplificationHell, Kuhrer:2014:EHR:2671225.2671233};
searching for and compromising insecure Internet-of-Things (IoT) devices (such as home
routers, Web cameras, etc.) has led to the outset of the Mirai botnet back in 2016
that was responsible for some of the largest DDoS ever recorded~\cite{203628, Krebbs, OVH, dyn};
variants of the Mirai epidemic still widely circulate (e.g., the Mozi~\cite{mozi2020} and Meri~\cite{krebbs2021} botnets)
and assault services at a global scale; cybercriminals have been exploiting the COVID-19 pandemic
to infiltrate networks via insecure VPN teleworking technologies that have been
deployed to facilitate work-from-home opportunities~\cite{cisa-covid19}. 

Notably, the initial phase of the aforementioned attacks is \emph{network scanning}, a step that is
necessary to detect and afterwards exploit vulnerable services/hosts. Against this background, 
network operators are tasked with monitoring and protecting
their networks and germane services utilized by their users. While many enterprises and large-scale
networks operate sophisticated firewalls and intrusion detection systems (e.g., Zeek, Suricata or
other non-open-source solutions), early signs of malicious network scanning activities
may not be easily noticed from their vantage points. 
Large \emph{Network Telescopes} or \emph{Darknets}~\cite{caida_telescope_report, orion}, however,
can fill this gap and can 
provide \emph{early warning notifications and insights} for emergent network threats to security analysts. 
Network telescopes consist of monitoring infrastructure that receives and records
unsolicited traffic destined to vast swaths of \emph{unused} but \emph{routed} Internet address spaces (i.e., millions
of IPs). This traffic, coined as ``Internet Background Radiation"~\cite{10.1145/1028788.1028794, 10.1145/1879141.1879149},
captures traffic from nefarious actors that perform Internet-wide scanning activities,
malware and botnets that aim to infect other victims, ``backscatter" activities that
denote DoS attacks~\cite{10.1145/1879141.1879149}, etc. Thus, Darknets
offer a unique lens into macroscopic Internet activities and timely detection
of new abnormal Darknet behaviors is extremely important.

In this paper, we consider Darknet data from the ORION Network Telescope
operated by Merit Network, Inc.~\cite{orion}, and construct multivariate signals
for various TCP/UDP ports (as well as other types of traffic) that denote the amount
of packets sent to the Darknet towards a particular port per monitoring interval (e.g., minutes). (See Figure~\ref{fig:illustration}.)
Our goals are to \emph{detect} when an ``anomaly" occurs 
in the Darknet\footnote{All traffic captured in the Darknet can be considered ``anomalous" since Darknets
serve no real services; however, henceforth
we slightly abuse the terminology and refer to traffic ``anomalies" in the statistical sense.}~and to also
accurately \emph{identify} the culprit port(s);
such threat intelligence would be invaluable in 
diagnosing emerging new vulnerabilities (e.g., ``zero-day" attacks).
Our algorithms are based on the state-of-the-art theoretical results on the~\emph{sparse signal support recovery} 
problem in a high-dimensional setting (see, e.g., \cite{10.3150/20-BEJ1197} and the recent 
monograph  \cite{gao:stoev:2021}).
\textbf{Our main contributions are}: 1) we showcase, using simulated as well as real-world Darknet data,
that signal trends (e.g., diurnal or weekly scanning patterns) can be filtered out from
the multi-variate scanning signals using efficient \emph{sequential PCA (Principal Component Analysis)} techniques~\cite{6483308};
2) using recent theory~\cite{10.3150/20-BEJ1197}, we demonstrate that simple thresholding techniques applied individually 
on each univariate time-series exhibit better detection power than competing methods proposed in the literature
for diagnosing network anomalies~\cite{10.1145/1030194.1015492}; 
3) we propose and apply a non-parametric approach as a thresholding mechanism for the
recovery of sparse anomalies; and
4) we illustrate our methods on real-world Darknet data using
techniques amenable to online/streaming implementation.

\section{Related Work}

Network telescopes have been widely employed by the networking
community to understand various macroscopic Internet events. E.g.,
Darknet data helped to shed light into botnets~\cite{203628, 6717049},
to obtain insights about network outages~\cite{Benson:2012:GIA:2413247.2413285, Dainotti:2012:EBH:2096149.2096154},
to understand denial of service attacks~\cite{Moore:2006:IID:1132026.1132027, Jonker:2017:MTU:3131365.3131383, 10.1145/2663716.2663717},
for examining the behavior of IoT devices~\cite{8450404}, for observing Internet misconfigurations~\cite{Czyz:2013:UII:2504730.2504732,10.1145/1879141.1879149}, etc.

Mining of meaningful patterns in Darknet data is a challenging task due
to the dimensionality of the data and the heterogeneity of the ``Darknet features" that one
could invoke. Several studies have resorted to unsupervised machine learning techniques, such
as clustering, for the task at hand. 
Niranjana et al.~\cite{niranjana2020darknet} propose using Mean Shift clustering algorithms on TCP features
to cluster source IP addresses to find attack patterns in Darknet traffic. 
Ban et al.~\cite{ban2012behavior} present a monitoring system that characterizes
the behavior of long term cyber-attacks by mining Darknet traffic. In this system, machine learning techniques such as clustering, classification and function regression are applied to the analysis of Darknet traffic.  
Bou-Harb et al.~\cite{bou2014multidimensional} propose
a multidimensional monitoring method for source port 0 probing attacks by analyzing Darknet
traffic. By performing unsupervised machine learning techniques on Darknet traffic, 
the activities by similar types of hosts are grouped by employing a set of statistical-based
behavioral analytics. This approach is targeted only for source port 0 probing attacks. 
Nishikaze et al.~\cite{nishikaze2015large} present a machine learning approach for large-scale monitoring of malicious activities on Internet using Darknet traffic. In the proposed system, network packets sent from a subnet to a Darknet are collected, and 
they are transformed into 27-dimensional TAP (Traffic
Analysis Profile) feature vectors. Then, a hierarchical clustering is performed to obtain clusters 
for typical malicious behaviors. In the monitoring phase, the malicious activities in a subnet are 
estimated from the closest TAP feature cluster. Then, such TAP feature clusters for all subnets are 
visualized on the proposed monitoring system in real time to identify malicious activities. 
Ban et al.~\cite{ban2016towards} present a study on early detection of emerging novel attacks. The 
authors identify attack patterns on Darknet data using a clustering algorithm and perform nonlinear 
dimension reduction to provide visual hints about the relationship among different attacks. 

Another family of methods rely on traffic prediction to detect anomalies.
E.g., Zakroum et al.~\cite{zakroum2022monitoring} infer anomalies on network telescope traffic by predicting probing rates. They present a framework to monitor probing activities targeting network telescopes using Long Short-Term Memory deep learning networks to infer anomalous probing traffic and to
raise early threat warnings. 



\section{Methodology}\label{sec:methodology}

\subsection{Problem Formulation}
Consider a vector time series $\vec{x}_t = \{x_t(i)\}_{i=1}^p$ modelling a collection of $p$ data streams (e.g., scanning activity
against $p$ ports). In network traffic monitoring applications, the data streams involve a highly non-stationary ``baseline'' traffic background signal $\vec{\theta}_t=\{\theta_t(i)\}_{i=1}^p,$ which could be largely unpredictable and highly variable. Considering Internet traffic specifically, this ``baseline'' traffic often includes diurnal or weekly periodic trends that can be modeled by a small number of 
common factors. 
We encode these periodic phenomena through the classic linear factor model. Namely, we make the assumption that 
$\vec{\theta}_t = B\vec{f}_t,$ where $B=(\vec{b}_1\hdots \vec{b}_k)_{p\times k}$ is a matrix of $k(\leq p)$ linearly independent columns that expresses the affected streams. 
Moreover, $\vec{f}_t=\{f_t(j)\}_{j=1}^k$ are the (non-stationary) factors, or periodic trends, that appear in these different streams. That is, the anomaly free regime can be modeled as 
\begin{equation}\label{eq:anom-free model}
 \vec{x}_t = B\vec{f}_t +\vec{\epsilon}_t,
\end{equation}
where $\vec{\epsilon}_t=\{\epsilon_t(i)\}_{i=1}^p$ is a vector time-series modeling the ``benign'' noise. The time-series $\{\vec{\epsilon}_t\}$ may and typically does have a non-trivial (long-range) 
dependence structure~\cite{willinger2001long}, but may be assumed to be stationary.

\begin{figure}
      \centering
      \includegraphics[width=0.9\linewidth]{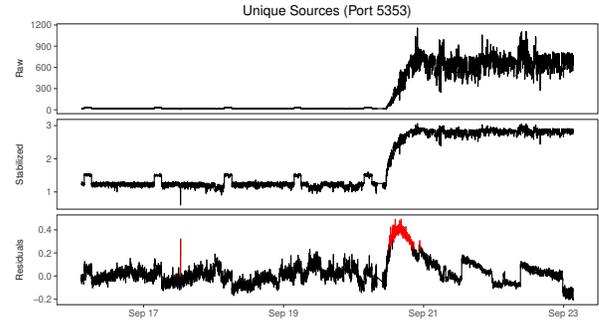}
      \caption{(a) Incoming unique source  IP traffic coming to port 5353, from 16 September, 0639 UTC to 23 September, 0719 UTC in 2016.  (b) Incoming traffic for the same port after stabilization through log-transform. Another event is now visible despite diurnal trends in traffic denoted by a spike around 17 September 14:00 UTC. (c) Residuals for the same port from the traffic model. 
      The large event is flagged by our algorithm. Our method adapts to the change of regime after the second event and does not flag the whole remaining time series as anomaly. Moreover, the periodic trend until 21 September is clearly filtered out after the application of the algorithm.}
      \label{fig:illustration}
\end{figure}

Anomalies such as aggressive network scanning may be represented (in a suitable feature space) by a mean-shift vector $\vec{u}_t=\{u_t(i)\}_{i=1}^p$, which is \textit{sparse}, i.e., it affects only a relatively small and unknown subset of streams $S_t:=\{i\in [p]: u_t(i)\neq0\}.$ Thus, in the anomalous regime, one observes
\begin{equation}\label{eq:anom model}
    \vec{x}_t = B\vec{f}_t +\vec{u}_t+\vec{\epsilon}_t.
\end{equation}
The general problems of interest are twofold:
\begin{itemize}
    \item[(i)]\textit{Detection}: Timely \textit{detection} of the presence of the anomalous component $\vec{u}_t.$
    \item[(ii)]\textit{Identification}: The estimation of the sets $S_t$ of the streams containing anomalies.
\end{itemize}


\subsection{An Algorithm for Joint Detection and Identification}\label{subsec: algorithm}
We provide a high level summary of the algorithm we implement in our analysis. We examine high-dimensional data 
sourced from Darknet traffic observations and our goal is the joint detection and identification of any nefarious activity that might be present in the dataset. 

In order to achieve our goals, we utilize sequential PCA techniques
(i.e., incremental PCA or \emph{iPCA})
attempting to estimate the factor subspace spanned by the common periodic trends. Once an estimate of this space is obtained, we project the new vector of data (one observation in each port signal) on this subspace and retain the residuals of the data minus their projections. Using an individual threshold for each stream, pertaining to the stream's marginal variance and a control 
limit chosen by the operator (see Section~\ref{subsec:operator guide} for guidance in system tuning), observations are flagged as anomalies individually.

\begin{algorithm}[h]
\caption{Online identification of sparse anomalies}\label{alg:identification}
\label{alg:method}
\begin{algorithmic}
\Require smoothing parameters $\lambda,\lambda_\mu,\lambda_\sigma$;\\ 
control limit $L$; effective subspace dimension $k$;\\
initial subspace estimates $\widehat B$; initial mean estimates $\widehat{\vec{\nu}}_x,\widehat{\vec{\nu}}_r$\\
initial residual marginal variance estimates $\widehat{\vec{\sigma}}^2_r$,\\
iPCA forget factor $\eta$,  robust estimation guard $R$.
\For{new data $\vec{x}_t=(x_t(j))_{j=1}^p\in\mathrm{R}^p$}
\For{$j\not\in \widehat{S}_{t-1}$}
    \State $\widehat \nu_x(j) \gets (1-\lambda)\widehat\nu_x(j) +\lambda x_t(j)$ \Comment{Update mean estimates}
\EndFor
\State $\vec{r}_t \gets (I - \widehat{B}(\widehat{B}^{\top}\widehat{B})^{-1}\widehat{B}^{\top})(\vec{x}_t-\widehat{\vec{\nu}}_x)$ \Comment{Projection step}
\State $\widehat{B} \gets \mathrm{iPCA}(\vec{x}_t, \widehat{B}, \widehat{\vec{\nu}}_x, k)$ \Comment{Update subspace}
\For{$j$:\ $|r_t(j)|<R\cdot\widehat \sigma_r(j)$}
    \State $\widehat\nu_r(j) \gets (1-\lambda_\mu)\widehat\nu_r(j) +\lambda_\mu r_t(j)$
\EndFor
\For{$j$:\ $|r_t(j)-\widehat \nu_r(j)|<R\cdot\widehat \sigma_r(j)$}
    \State $\widehat\sigma^2_r(j) \gets (1-\lambda_\sigma)\widehat\sigma^2_r(j) + \lambda_\sigma (r_t(j)-{\widehat\nu_r(j)})^2 $ \\\Comment{Update residual variances}
\EndFor
\State $\widehat{S_t} \gets \{j\;|\;|r_t(j)-\widehat \nu_r(j)| > L\times\widehat\sigma(j)\}$ 
\If{$\widehat{S_t} \neq\; $\O }
    \State Raise alert on $\widehat{S_t}$
\EndIf
\EndFor
\end{algorithmic}
\end{algorithm}

The algorithm can be broken down in five phases. 
\begin{enumerate}
    \item \textbf{Input.} 
    Tuning parameters include the length of the warm-up period ($n_0$) and smoothing parameters for some Exponentially Weighted Moving Average (EWMA) steps~\cite{10.2307/1269835}; $\lambda,\lambda_{\mu},\lambda_{\sigma}$ denote the memory parameters for the EWMA of the mean of the data $\vec{x}_t$, the mean of the residuals $\vec{r}_t$ and the marginal variance of the residuals $(\widehat{\vec{\sigma}}_r^2)$. Additionally, the memory parameter of the {\em incremental PCA (iPCA)} (namely, $\eta$) can be chosen by the user. Moreover, one can choose the control limit $L$, used in the alert phase, the variance cut-off percentage $\pi$ and the robust guard estimation (REG) parameter $R$, controlling the sequential update of the EWMAs.         
    \item \textbf{Initialization.} The ``warm-up'' dataset, undergoes a batch PCA. Using the percentage $\pi$
    that denotes the fraction of variance ``explained", one can determine the effective dimension $k$ of the factor subspace to be estimated, which dictates the trends that would be filtered out.
    The mean of the data $(\widehat{\vec{\nu}}_x)$, the mean $(\widehat{\vec{\nu}}_r)$ and the marginal variance $(\widehat{\vec{\sigma}}_r^2)$ of the residuals are initialized based on the training data. 
    \item \textbf{Sequential Update.} A new vector of data $(\vec{x}_t)$ is observed and passed through the algorithm. We update the estimated factor subspace $\widehat B$ using incremental PCA~\cite{6483308}. The vector $\widehat{\vec{\nu}}_x$ is updated using EWMA. 
    \item \textbf{Residuals.} Depending on whether there is an ongoing anomaly, we center $\vec{x}_t$. We project the centered data onto the orthogonal complement of $\widehat B$ and obtain the residual $(\vec{r}_t)$. If any coordinate of
    $\vec{r}_t$ is smaller in magnitude than $R$ times the marginal variance, then the appropriate elements of
    both $\widehat{\vec{\nu}}_r$ and $\widehat{\vec{\sigma}}_r^2$ are updated via EWMA. 
    \item \textbf{Alerts.} Finally, if any coordinate in the centered residuals exceeds
    in magnitude the corresponding element in
    $L \times\widehat{\vec{\sigma}}_r$,  an alert is raised. 
\end{enumerate}

There are two important points that we would like to make regarding the effectiveness of the proposed Algorithm.
\begin{itemize}
    \item First, non-stationarity in the form of e.g.\ diurnal cycles in the data is absorbed in the projection step, and not with a time-series model.
    
    
    \item The sparse anomalous signals remain largely unaffected by the projection step. This can be quantified with the help of an
    incoherence condition (see Proposition~\ref{p:prop1} and its proof in Section \ref{appendix}) that is generally satisfied in network traffic monitoring problems. This then leaves us with the residuals approximating the signals
    \begin{equation} \label{eq:residuals}
        \vec{r}_{t} \approx \vec{u}_{t} + \vec{\epsilon}_t,
    \end{equation}
    and enables us to detect and locate sparse anomalies. 
\end{itemize}

Moreover, the sequentially and non-parametrically updated estimate of the variance $\widehat{\vec{\sigma}}_r^2$ on Step (4) of the algorithm can be shown to be a consistent estimator in the special case that our residuals $\{\vec{r}_t\}_{t=1}^{\infty}$ form a Gaussian time series. (More general cases could be examined as future work.) 
Indeed, let $\{{r}_t(j),\ t\in\mathbb{Z},\ j\in [p]:=\{1,\dots,p\}\}$ be a zero-mean stationary 
Gaussian time series, with some correlation structure
\begin{equation}\label{eq:correlation}
    \rho_t(j) = {\rm Cor}({r}_t(j),{r}_0(j)) = {\rm Cor}({r}_{t+h}(j),{r}_h(j)),
\end{equation} 
for $\ t\in\mathbb{Z},\ j\in[p],\ \forall h\in\mathbb{Z}$. We propose the estimator   
\begin{equation}\label{e:var_estim}
    \widehat{{\sigma}}_t^{2}(j) = (1-\lambda_\sigma)\widehat{{\sigma}}_{t-1}^2(j) + \lambda_\sigma {r}_t^2(j),
\end{equation}
in order to estimate the unknown variance of $\vec{r}_t$ non-parametrically. Namely for the estimation of this variance, we implement an EWMA on the squares of the zero-mean stationary series $\{\vec{r}_t,\ t\in\mathbb{Z}\}$, which the following Proposition~\ref{prop:consistency} proves consistent.
\begin{prop}\label{prop:consistency}
Let $\{\widehat\sigma_t^2,\ t\in\mathbb{Z}\}$ be defined as in \eqref{e:var_estim} and $\{\rho_t, t\in\mathbb{Z}\}$ as in \eqref{eq:correlation} (we have dropped the index $j$ to keep the notation uncluttered). If $\sum_{t=-\infty}^{\infty}\rho_t^2<\infty$, then the estimator $\widehat \sigma_t^2$ is consistent.
\end{prop}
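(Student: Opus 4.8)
The plan is to read ``consistency'' in the only regime under which an EWMA can be consistent, namely $\lambda_\sigma \to 0$, where the effective averaging window grows without bound, and to prove $L^2$-convergence $\widehat{\sigma}_t^2 \to \sigma^2 := \mathrm{Var}(r_0)$ by treating bias and variance separately. First I would unroll the recursion \eqref{e:var_estim}. Taking the process to have run from $t=-\infty$ (equivalently, noting that any fixed initialization contributes a term $(1-\lambda_\sigma)^t\,\widehat{\sigma}_0^2$ that decays geometrically and is irrelevant to the $\lambda_\sigma\to 0$ limit of the stationary solution), one obtains the moving-average representation
\[
\widehat{\sigma}_t^2 = \lambda_\sigma \sum_{k=0}^{\infty}(1-\lambda_\sigma)^k\, r_{t-k}^2 .
\]
Since $\{r_t\}$ is zero-mean stationary with $\mathbb{E}[r_0^2]=\sigma^2$ and $\lambda_\sigma\sum_{k\ge 0}(1-\lambda_\sigma)^k = 1$, the estimator is \emph{exactly} unbiased, $\mathbb{E}[\widehat{\sigma}_t^2]=\sigma^2$, so only the variance must be shown to vanish.

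The core of the argument is the variance computation, and this is the step where Gaussianity is indispensable. I would invoke Isserlis' (Wick's) theorem, which for zero-mean jointly Gaussian variables gives the clean identity $\mathrm{Cov}(r_{t-k}^2, r_{t-l}^2) = 2\big(\mathbb{E}[r_{t-k}r_{t-l}]\big)^2 = 2\sigma^4\rho_{k-l}^2$. Substituting into the bilinear expansion of the variance of the moving average yields
\[
\mathrm{Var}(\widehat{\sigma}_t^2) = 2\sigma^4\,\lambda_\sigma^2 \sum_{k,l\ge 0}(1-\lambda_\sigma)^{k+l}\,\rho_{k-l}^2 .
\]
The key algebraic manipulation is to collapse this double sum along diagonals by setting $h:=k-l$ and summing the resulting geometric series: writing $q:=1-\lambda_\sigma$ and using $1-q^2=\lambda_\sigma(2-\lambda_\sigma)$, each diagonal contributes $q^{|h|}/(1-q^2)$, so
\[
\mathrm{Var}(\widehat{\sigma}_t^2) = \frac{2\sigma^4\,\lambda_\sigma}{2-\lambda_\sigma}\sum_{h=-\infty}^{\infty} q^{|h|}\,\rho_h^2 .
\]

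It then remains to let $\lambda_\sigma\to 0$. Since $0\le q^{|h|}\le 1$ and $\sum_h\rho_h^2<\infty$ by hypothesis, dominated convergence gives $\sum_h q^{|h|}\rho_h^2 \to \sum_h\rho_h^2<\infty$, while the prefactor $2\sigma^4\lambda_\sigma/(2-\lambda_\sigma)\to 0$; hence $\mathrm{Var}(\widehat{\sigma}_t^2)\to 0$. Combined with unbiasedness, this establishes mean-square (hence in-probability) convergence to $\sigma^2$, which is the asserted consistency. I expect the main obstacle to be the fourth-moment step: the Isserlis identity is precisely what turns the fourth-order quantity $\mathrm{Cov}(r_{t-k}^2,r_{t-l}^2)$ into the squared correlation $\rho_{k-l}^2$ and thereby makes the summability condition $\sum_h\rho_h^2<\infty$ both natural and sufficient. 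Outside the Gaussian case one would instead pick up uncontrolled fourth-order cumulants, and $\sum_h\rho_h^2<\infty$ alone would no longer close the argument — which is exactly why the statement is confined to the Gaussian setting and the general case is deferred to future work.
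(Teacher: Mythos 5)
Your proposal is correct and follows essentially the same route as the paper: unroll the EWMA into a weighted moving average of $r_{t-k}^2$, verify the mean converges to $\sigma_r^2$, use the Gaussian fourth-moment identity $\mathrm{Cov}(r_{t-k}^2,r_{t-l}^2)=2\sigma_r^4\rho_{k-l}^2$ (the paper derives this via the representation $Z_1=\rho Z_2+\sqrt{1-\rho^2}Z_3$ rather than citing Isserlis, but it is the same fact), collapse the double sum along diagonals, and let $\lambda_\sigma\to 0$ using $\sum_h\rho_h^2<\infty$. The only differences are cosmetic: you work with the infinite-horizon stationary solution and an exact variance identity, whereas the paper keeps the finite horizon, takes $t\to\infty$ first, and settles for an upper bound of the same order $\lambda_\sigma\sum_h\rho_h^2$.
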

\begin{rem}\label{rem:variance-ewma} Note that at first glance the EWMA estimator for the variance of the residuals in \eqref{e:var_estim} seems inconsistent with the one in Algorithm~\ref{alg:identification}, due to the lack of the centering term $\widehat \nu_r(j)$. Recall that our method firstly projects the original data in the orthogonal complement of the estimated subspace $\widehat B$, obtaining the residuals $\widehat{\vec{r}}_t$, which are then centered. The centered residuals correspond to the residuals that we utilize in the proof and setting of Proposition~\ref{prop:consistency}, i.e., we assume they are a zero-mean Gaussian stationary time series with square summable correlation structure. Thus, the update of $\widehat \sigma_r^2(j)$ in \eqref{e:var_estim} does not require the existence of a centering factor.  
\end{rem}

We now proceed to elaborate on the incoherence condition, which ensures that the sparse anomalies are largely retained by 
the projection step.

\subsection{Incoherence conditions}

Disentangling sparse errors from low rank matrices has been a topic of extensive research since the last decade. While there are seminal works on recovering sparse contamination from a low-rank matrix \textit{exactly}, they all resort to solving a convex program or a program with convex regularization terms \cite{candes2012exact, candes2011robust, xu2010robust}. The form that most closely resembles the set-up of our problem is Zhou et.al. \cite{zhou2010stable}, where the model has both a Gaussian error term and a sparse corruption; see also Xu et.al.\cite{xu2010robust}, Candes et.al. \cite{candes2011robust} and references therein. 

This line of literature often criticizes the classical PCA in the face of outliers and data corruptions. However, to the best of our knowledge, there has been no analysis 
documenting just how or when the classical PCA becomes brittle in recovering sparse errors when the training period itself is contaminated. Our empirical and theoretical findings
point to the contrary.

Secondly, rarely can such convex optimization-based methods be made ``online". Our approach here is to separate sparse errors from streaming
data rather than stacked vectors of observations (i.e., measurement matrices) collected over long stretches of time. The low-rank matrix completion line of work faces challenges when data comes in streams rather than in batches, because the matrix nuclear norm often used in such problems closely couples all data points. In a notable effort by Feng, Xu, and Yan \cite{feng2013online} an iterative algorithm was proposed to solve the optimization problem with streaming data. However, implementing the algorithm requires extensive calibration and tuning which makes it impractical for non-stationary data streams with ever-shifting structures.

We now make formal the following statement: under the suitable structural assumption on the factor loading matrix $B$ stated below, PCA is still a resilient tool for recovery of the low-rank component and the sparse errors on observations $\vec{x}_t$. Our results provide theoretical guarantees on the error in recovering the locations and magnitudes of sparse anomalies.

We analyze the faithfulness of residuals in \eqref{eq:residuals} in recovering sparse anomalies under the following assumptions.
\begin{equation}
\label{A1}
\lambda_{\mathrm{min}}(B^\top B)\ge \phi(p), \text{ for some function } \phi \text{ of } p.
\end{equation}
Entries of $B$ are bounded by a constant $C$, uniformly in $p$:
\begin{equation}
\label{A2}
|B(i,j)| \le C, \quad \forall i\in [p]:=\{1,\dots,p\}, j \in [k]
\end{equation}

Conditions \eqref{A1} \& \eqref{A2} are closely linked to the so-called incoherence conditions in high-dimensional statistics~\cite{candes2012exact}.
Notice that for \eqref{A1} and \eqref{A2} to hold simultaneously, we need $\phi(p) \le Cp$. These conditions are important for the
theoretical guarantees in the high-dimensional asymptotics regime where $p\to\infty$ but in practice they are quite mild and natural.

Condition \eqref{A1} ensures that the background signal in the factor model $B \vec{f}_t$ has enough energy or a ``ground-clearance" relative to the dimension. 
In the context of Internet traffic monitoring, this is not a restrictive condition
since otherwise the background signal can be modeled with a lower-dimensional factor model with smaller value of $k$.   The second assumption \eqref{A2}, taken together 
with \eqref{A1} ensures that the columns of $B$ are not sparse and consequently the background traffic is not concentrated on one or a few ports. This is also natural.
Indeed, if the background was limited to a sparse subset of ports, then they would always behave differently than the majority of the ports and one can simply analyze these ports separately
using a lower-dimensional model with smaller value of $p$, where we no longer have sparse background.

Let now $\Sigma = B B^\top$ and $\widehat \Sigma$ be an estimate of $\Sigma$ obtained for example, by
performing iPCA and taking the top-$k$ principal components. (Alternatively, one simply take 
$\widehat \Sigma:=n^{-1}\sum_{t=1}^n \vec{x}_t \vec{x}_t^\top$, for a window of $n$ past observations.)





\begin{prop}[Resilience of PCA]\label{p:prop1} Assume (\ref{A1}) and (\ref{A2}) hold. Then, 
for any $k$ and $p$, and for each coordinate $i\in[p]$,
\begin{multline}\label{e:six}
\E(r_t(i) - (\epsilon_t(i) + u_t(i)))^2  \le\\  
\Bigg[\left(\E\|\widehat\Sigma-\Sigma\|^2\right)^{1/2}\frac{2\sqrt{kp}}{\phi(p)}\left(c_fCk+1+\sqrt{\frac{\mbox{tr}(\Sigma_u)}{p}}\right)\\ + \frac{\sqrt{k}C}{\phi(p)}\left(1+\sqrt{kp}C\|\Sigma_u\|^{1/2}\right)\Bigg]^2
\end{multline}
where $\Sigma_{{u}} = \E [ \vec{u}_t\vec{u}_t^\top]$ and $\|\vec{f}_t\|\le c_f\sqrt{k}$.
\end{prop}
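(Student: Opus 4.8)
The plan is to reduce \eqref{e:six} to two clean contributions via an exact algebraic identity, and then to control each one by the spectral perturbation of the estimated factor subspace together with the incoherence assumptions \eqref{A1}--\eqref{A2}. Write $P=B(B^\top B)^{-1}B^\top$ and $\widehat P=\widehat B(\widehat B^\top\widehat B)^{-1}\widehat B^\top$ for the orthogonal projections onto the column spaces of $B$ and $\widehat B$, so that $\vec r_t=(I-\widehat P)\vec x_t$ (treating the mean estimates as exact, cf.\ Remark~\ref{rem:variance-ewma}). Since $PB=B$, substituting $\vec x_t=B\vec f_t+\vec u_t+\vec\epsilon_t$ yields the exact identity
\begin{equation*}
\vec r_t-(\vec u_t+\vec\epsilon_t)=(P-\widehat P)\vec x_t-P(\vec u_t+\vec\epsilon_t).
\end{equation*}
Reading off coordinate $i$ and applying Minkowski's inequality in $L^2(\P)$ gives $\big(\E(r_t(i)-(\epsilon_t(i)+u_t(i)))^2\big)^{1/2}\le A_i+B_i$, where $A_i=(\E(e_i^\top(P-\widehat P)\vec x_t)^2)^{1/2}$ and $B_i=(\E(e_i^\top P(\vec u_t+\vec\epsilon_t))^2)^{1/2}$. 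Squaring reproduces the $[\,\cdots\,]^2$ form of \eqref{e:six}, and it remains to match $A_i$ with the first summand and $B_i$ with the second.

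For the subspace-estimation term $A_i$ I would bound $|e_i^\top(P-\widehat P)\vec x_t|\le\|P-\widehat P\|\,\|\vec x_t\|$ and invoke the Davis--Kahan $\sin\Theta$ theorem. The key structural point is that $\Sigma=BB^\top$ shares its nonzero eigenvalues with $B^\top B$, so by \eqref{A1} its top $k$ eigenvalues are at least $\phi(p)$ while the remaining $p-k$ eigenvalues vanish; hence the relevant eigen-gap equals the smallest nonzero eigenvalue and is bounded below by $\phi(p)$, giving $\|P-\widehat P\|\le 2\sqrt k\,\|\widehat\Sigma-\Sigma\|/\phi(p)$ (the $\sqrt k$ arising when passing from the operator to the Frobenius norm of the rank-$k$ projection difference). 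For $\|\vec x_t\|$ I would use the triangle inequality with $\|B\|\le C\sqrt{pk}$ from \eqref{A2}, the hypothesis $\|\vec f_t\|\le c_f\sqrt k$, $\E\|\vec u_t\|^2=\mbox{tr}(\Sigma_u)$, and the normalized benign noise; factoring out $2\sqrt{kp}/\phi(p)$ then produces exactly $c_fCk+1+\sqrt{\mbox{tr}(\Sigma_u)/p}$. Here I would take $\widehat\Sigma$ to be formed from past/independent data, so that the expectation factorizes into $\E\|\widehat\Sigma-\Sigma\|^2$ times the moments of $\vec x_t$; otherwise a Cauchy--Schwarz step absorbs the dependence.

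For the leakage term $B_i$ I would split $P(\vec u_t+\vec\epsilon_t)$ and use $P=B(B^\top B)^{-1}B^\top$. The signal part obeys $|e_i^\top P\vec u_t|\le\|B^\top e_i\|\,\|(B^\top B)^{-1}\|\,\|B^\top\vec u_t\|$, where $\|B^\top e_i\|\le C\sqrt k$ by \eqref{A2}, $\|(B^\top B)^{-1}\|\le 1/\phi(p)$ by \eqref{A1}, and $(\E\|B^\top\vec u_t\|^2)^{1/2}=(\mbox{tr}(B^\top\Sigma_uB))^{1/2}\le C\sqrt{kp}\,\|\Sigma_u\|^{1/2}$; this yields the $\frac{\sqrt kC}{\phi(p)}\,\sqrt{kp}\,C\|\Sigma_u\|^{1/2}$ contribution. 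Writing $\Sigma_\epsilon=\E[\vec\epsilon_t\vec\epsilon_t^\top]$, the noise part is controlled by $\E(e_i^\top P\vec\epsilon_t)^2=e_i^\top P\Sigma_\epsilon Pe_i\le\|\Sigma_\epsilon\|\,\|Pe_i\|^2$, which is $O(1)$ under a normalization of the benign noise to unit marginal variance and supplies the additive $1$.

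I expect the main obstacle to be the rigorous Davis--Kahan step. Because $\Sigma$ is rank-deficient, one must carefully certify that the gap separating the retained top-$k$ eigenspace from the null space is $\phi(p)$ and that $\|\widehat\Sigma-\Sigma\|$ is small enough to keep the two eigenspaces from mixing; it is precisely this step that is responsible for the $\sqrt k\,\|\widehat\Sigma-\Sigma\|/\phi(p)$ factor driving the first summand. The complementary subtlety is the stochastic dependence between $\widehat\Sigma$ and the current observation $\vec x_t$, handled by the sample-splitting assumption above. Everything else is routine norm bookkeeping with \eqref{A1}--\eqref{A2} and Minkowski's inequality.
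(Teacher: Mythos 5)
Your decomposition is, after regrouping, the same as the paper's: the paper splits $\vec r-\vec\epsilon-\vec u$ into $(\mathcal P_{B_0}-\mathcal P_{\widehat{B_0}})B\vec f$, $-\mathcal P_{\widehat{B_0}}\vec\epsilon$ and $-\mathcal P_{\widehat{B_0}}\vec u$, then splits the last two into a $\mathcal P_{B_0}$-leakage piece and a $(\mathcal P_{\widehat{B_0}}-\mathcal P_{B_0})$ piece; collecting all the projection-difference pieces gives exactly your $(P-\widehat P)\vec x_t$, and the leakage pieces give your $P(\vec u_t+\vec\epsilon_t)$. The key lemma is likewise identical (the Davis--Kahan corollary with gap $\lambda_k-\lambda_{k+1}=\lambda_k\ge\phi(p)$ since $\Sigma=BB^\top$ is rank $k$), your bookkeeping for $\|\vec x_t\|$ reproduces the first bracket of \eqref{e:six} term by term, and your treatment of $P\vec u_t$ via $\|B^\top e_i\|\le C\sqrt k$, $\|(B^\top B)^{-1}\|\le 1/\phi(p)$ and $\mathrm{tr}(B^\top\Sigma_uB)\le kpC^2\|\Sigma_u\|$ matches the paper's von Neumann/trace computation.

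The one place your argument does not deliver the stated bound is the noise-leakage term $\E(e_i^\top P\vec\epsilon_t)^2$. Bounding it by $\|\Sigma_\epsilon\|\,\|Pe_i\|^2\le 1$ only yields an additive constant, whereas the second bracket of \eqref{e:six} requires this term to carry the prefactor $\sqrt kC/\phi(p)$; with $\phi(p)\to\infty$ your version of the bound would not vanish and would be strictly weaker than the proposition. The fix is the same incoherence step you already used for $P\vec u_t$: write $\|Pe_i\|^2\le\|B^\top e_i\|^2\,\|(B^\top B)^{-1}\|\le kC^2/\phi(p)$, so the leakage is $O(\sqrt kC/\sqrt{\phi(p)})$ rather than $O(1)$. (For what it is worth, the paper's own intermediate estimate is $kC^2/\lambda_{\min}$, whose square root is $\sqrt kC/\sqrt{\phi(p)}$, so even the displayed $\sqrt kC/\phi(p)$ appears to overstate this step; but in any case the uniform $O(1)$ bound is not enough.) A second, shared imprecision: both you and the paper control $\E\|(P-\widehat P)\vec v\|$ by Cauchy--Schwarz but then feed the result into Minkowski as if it bounded $(\E\|(P-\widehat P)\vec v\|^2)^{1/2}$; a fully rigorous version either needs fourth moments or an independence/sample-splitting assumption of the kind you mention.
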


This result, under the conditions of Proposition \ref{prop:bound on Sigma hat minus Sigma}, shows that if $p^{5/2} = o(\phi(p))$, with $k$ and $\Sigma_{{u}}$ bounded, the upper 
bound in \eqref{e:six} converges to zero, as the dimension $p$ grows. \textbf{That is, the anomaly signal 
$\vec{u}_t$ ``passes through'' to the residuals $\vec{r}_t$ and the approximation in \eqref{eq:residuals} 
can be quantified.} 
Indeed, assuming $k$ is fixed for a moment, \eqref{e:six} entails that the point-wise bound on mean-squared difference between
the unobserved anomaly-plus-noise signal $u_t(i) + \epsilon_t(i)$
and the residuals $r_t(i)$ obtained from our algorithm is 
${\cal O}(\E(\|\widehat \Sigma - \Sigma\|^2) p/\phi^2(p))$.  
This means that in practice, anomalies $u_t(i)$ of magnitudes exceeding
$p^{5/2}/\phi(p)$ will be present in the residuals $r_t(i)$.  As seen in the lower bound in 
\eqref{e:amplitude}
this is a rather mild restriction.  Thus, in view of Theorem \ref{thm: support recovery}, 
provided $\phi(p)\gg \sqrt{p/\log(p)}$, the theoretically optimal exact identification of all 
sparse anomalies is unaffected by the background signal $B\vec{f}_t$.  

In practice, however, the key caveat is the
accurate estimation of $\Sigma$, which can be challenging if the noise $\vec{\epsilon}_t$ and/or the factor signals $\vec{f}_t$
are long-range dependent. In Proposition \ref{prop:bound on Sigma hat minus Sigma}
we establish upper bounds on $\E [\|\widehat \Sigma - \Sigma\|^2]$ via the Hurst long-range dependence parameter of the 
time-series $\{\epsilon_t(i)\},$ conditions on $\{f_t\}$, and the length of the training window $n$, while the time-series $\{u_t(i)\}$ is also assumed long-range dependent.  It shows that even in the presence of long-range dependence, provided the training window and the memory of iPCA are sufficiently large, the background signal can be effectively filtered out without affecting the theoretical boundary for exact support recovery discussed in the next section.

%

\subsection{Statistical Limits in Sparse Anomaly Identification}
As argued in the previous section using for example PCA-based filtering methods, 
one can remove complex non-sparse spatio-temporal background/trend signal without affecting significant {\em sparse} 
anomalies. Therefore, the sparse anomaly detection and identification problems can be addressed transparently in the context of
the signal-plus-noise model. Namely, assume that we have a way to filter out the background traffic $B\vec{f}$ induced by multivariate non-sparse trends\footnote{We drop the time subscript to keep the notation uncluttered.} in \eqref{eq:anom model}. Thus, we suppose 
that we directly observe the term:
$$
\vec{x}=\vec{\epsilon}+\vec{u}, 
$$
where $\vec{u}$ is a sparse vector with $s\ll p$ non-zero entries. 
In this context, we have two types of problems:
\begin{itemize}
    \item {\bf Detection problem.} Test the hypotheses
    \begin{equation}\label{eq:hypothesis test}
        {\cal H}_0\, :\, \vec{u}=0 \mbox{ (no anomalies)\ \ vs \ \ }{\cal H}_1\, :\, \vec{u}\not =0.      
    \end{equation}
    \item {\bf Identification problem.} Estimate the {\em sparse support set} 
    \begin{equation}\label{eq:support recovery}
     S= S_p :=\{ i\in [p]\, :\, u(i)\not = 0\}   
    \end{equation}
    of the locations of the anomalies in $\vec{u}$.
\end{itemize}
Starting from the seminal works of \cite{ingster:1998,donoho:jin:2004}
the fundamental statistical limit of the {\em detection problem} has 
been studied extensively under the so-called high-dimensional asymptotic regime 
$p\to\infty$ (see, e.g., Theorem 3.1 in the recent monograph \cite{gao:stoev:2021}
and the references therein).  
The identification problem and more precisely the exact recovery of the 
support $S_p$ have only been addressed recently \cite{butucea:ndaoud:stepanova:tsybakov:2018,10.3150/20-BEJ1197}.
To illustrate, suppose that the errors have standard Gaussian distributions
$\epsilon_t(i)\sim {\cal N}(0,1)$ and are independent in $i$ (more general distributional assumptions and 
results on {\em dependent} errors have been recently developed in \cite{gao:stoev:2021}). 
Consider the following parameterization of the anomalous signal support size and amplitude as a function 
of the dimension $p$:
\begin{itemize}
\item {\bf Support sparsity.} For some $\beta\in (0,1]$,
\begin{equation}\label{e:support}
|S|\asymp p^{1-\beta},\ \ \mbox{ as }p\to\infty.
\end{equation}
The larger the $\beta$, the sparser the support.
\item {\bf Signal amplitude.} The non-zero signal amplitude satisfies
\begin{equation}\label{e:amplitude}
 \sqrt{2 \underline r \log(p)} \le u(i) \le \sqrt{2 \overline r \log(p)},\ \ \mbox{ for all $i\in S_p$.}
\end{equation}
\end{itemize}
We have the following phase-transition result (see, e.g., Theorem 3.2 in \cite{gao:stoev:2021}).
\begin{thm}[Exact support recovery]\label{thm: support recovery} 
Let the signal $\vec{u}$ satisfy the sparsity and amplitude parameterization as in \eqref{e:support} 
and \eqref{e:amplitude}, and let
$$
g(\beta) =(1+ \sqrt{1-\beta})^2,\ \beta\in [0,1].
$$
\begin{enumerate}
    \item If $\overline {r}<  g(\beta)$, then for any signal support estimator $\widehat S_p$, we have 
    \[\lim_{p\to\infty}\mathbb{P}\left[\widehat S_p = S_p\right]=0.\]
    \item If $g(\beta) < \underline{r}$, then for the thresholding support estimator 
    $\widehat S := \{ j\in [p]\, :\, x(j)> t_p\}$ with $\P[\epsilon(i)> t_p]\sim \alpha(p)/p$, with $\alpha(p)\to 0$ such that 
    $\alpha(p)p^\delta \to \infty, \forall \delta>0$, we have
    \[\lim_{p\to\infty}\mathbb{P}\left[\widehat S_p = S_p\right]=1.\]
\end{enumerate}
\end{thm}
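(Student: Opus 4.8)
The plan is to treat the two directions separately, since the upper bound (achievability) reduces to a union-bound calculation while the lower bound (impossibility) requires a least-favorable-prior reduction together with an extreme-value analysis. Throughout I write $s=|S_p|\asymp p^{1-\beta}$ and first record that the prescribed threshold $t_p$, defined by $\P[\epsilon(i)>t_p]\sim\alpha(p)/p$, satisfies $t_p=\sqrt{2\log p}\,(1+o(1))$: the assumption $\alpha(p)p^{\delta}\to\infty$ for every $\delta>0$ forces $\log(1/\alpha(p))=o(\log p)$, so the Gaussian tail asymptotic $\P[\epsilon>t]\sim\varphi(t)/t$ pins $t_p^2/2=\log p-\log\alpha(p)+o(\log p)=(1+o(1))\log p$.

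\emph{Achievability} ($g(\beta)<\underline{r}$). First I would bound the probability of any false positive. Since there are $p-s\le p$ null coordinates, each exceeding $t_p$ with probability $\sim\alpha(p)/p$, the expected number of false positives is $(p-s)\,\P[\epsilon>t_p]\sim\alpha(p)\to0$, so by Markov's inequality the event of at least one false positive has vanishing probability. Next I would bound the false negatives: for $i\in S_p$, $\P[x(i)\le t_p]=\Phi(t_p-u(i))\le\Phi(t_p-\sqrt{2\underline{r}\log p})$, and since $\underline{r}>g(\beta)\ge1$ the argument is negative of order $-\sqrt{\log p}$, whence the tail bound gives $\Phi(t_p-\sqrt{2\underline{r}\log p})\le\exp\{-(1+o(1))(\sqrt{\underline{r}}-1)^2\log p\}$. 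Multiplying by $s\asymp p^{1-\beta}$ shows the expected number of false negatives is at most $\exp\{\log p\,[(1-\beta)-(\sqrt{\underline{r}}-1)^2]+o(\log p)\}$, which tends to zero precisely because $\underline{r}>g(\beta)$ is equivalent to $(\sqrt{\underline{r}}-1)^2>1-\beta$. A union bound over the two error events then yields $\P[\widehat S_p=S_p]\to1$; this is the routine part.

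\emph{Impossibility} ($\overline{r}<g(\beta)$). Because the claim must be read in the minimax sense over the signal class (a single fixed configuration is trivially matched by a constant estimator), I would lower-bound the minimax risk by the Bayes risk under a least-favorable prior: take $S_p$ uniform among the size-$s$ subsets of $[p]$ and set every amplitude to its largest admissible value $a=\sqrt{2\overline{r}\log p}$, which only makes recovery easier and so yields the strongest impossibility. Under this exchangeable prior the observations are conditionally independent given $S_p$ with monotone likelihood ratio in each coordinate, so the MAP (Bayes-optimal) estimator for exact recovery — even granted knowledge of $s$ and $a$ — maximizes $a\sum_{i\in S}x(i)$ and hence selects the $s$ coordinates with the largest $x(i)$. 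This rule recovers $S_p$ exactly iff $\min_{i\in S_p}x(i)>\max_{i\notin S_p}x(i)$, so it suffices to show this inversion fails with probability tending to one.

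The main obstacle is the extreme-value analysis establishing that inversion. I would use $\max_{i\notin S_p}\epsilon(i)=(1+o_\P(1))\sqrt{2\log p}$ (maximum of $\sim p$ independent standard normals) and $\min_{i\in S_p}\epsilon(i)=-(1+o_\P(1))\sqrt{2(1-\beta)\log p}$, giving $\max_{\text{null}}x(i)-\min_{\text{signal}}x(i)=\sqrt{2\log p}-\big(a-\sqrt{2(1-\beta)\log p}\big)+o_\P(\sqrt{\log p})$. The leading term equals $\sqrt{2\log p}\,\big[(1+\sqrt{1-\beta})-\sqrt{\overline{r}}\big]=\sqrt{2\log p}\,\big[\sqrt{g(\beta)}-\sqrt{\overline{r}}\big]$, which is strictly positive of order $\sqrt{\log p}$ exactly when $\overline{r}<g(\beta)$, dominating the $o_\P(\sqrt{\log p})$ fluctuations of the two extremes. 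Hence the inversion occurs with probability $\to1$, the top-$s$ rule fails with probability $\to1$, and therefore $\inf_{\widehat S_p}\sup P[\widehat S_p\ne S_p]\to1$. The delicate point to make rigorous is precisely that the fluctuations of the null-maximum and the signal-minimum are genuinely $o_\P(\sqrt{\log p})$ and do not conspire to close the $\Theta(\sqrt{\log p})$ gap; I would control this by standard concentration for Gaussian extremes, or equivalently by a first/second-moment argument on the counts of nulls above, and signals below, a level placed at the midpoint of the gap. The decomposition $\sqrt{g(\beta)}=1+\sqrt{1-\beta}$ transparently exhibits the two competing scales — the null maximum $\sqrt{2\log p}$ and the signal-minimum deficit $\sqrt{2(1-\beta)\log p}$ — whose sum the amplitude must exceed for recovery to be possible.
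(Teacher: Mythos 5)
The paper does not actually prove Theorem~\ref{thm: support recovery}: it is imported verbatim from the literature (``see, e.g., Theorem 3.2 in \cite{gao:stoev:2021}''), so there is no in-paper argument to compare yours against. Judged on its own, your proof is correct and follows the standard route of \cite{butucea:ndaoud:stepanova:tsybakov:2018,10.3150/20-BEJ1197}: for achievability, the calibration $t_p=\sqrt{2\log p}\,(1+o(1))$ from $\alpha(p)p^{\delta}\to\infty$, plus a first-moment/union bound on false positives ($\sim\alpha(p)\to0$) and false negatives ($p^{1-\beta}e^{-(1+o(1))(\sqrt{\underline r}-1)^2\log p}\to0$ iff $(\sqrt{\underline r}-1)^2>1-\beta$, i.e.\ $\underline r>g(\beta)$), is exactly the right computation; for impossibility, the reduction to the uniform prior over size-$s$ supports with all amplitudes at $\sqrt{2\overline r\log p}$, the identification of the Bayes-optimal rule as the top-$s$ selector via the monotone likelihood ratio, and the extreme-value comparison $\max_{\mathrm{null}}x-\min_{\mathrm{signal}}x=\sqrt{2\log p}\,[\sqrt{g(\beta)}-\sqrt{\overline r}\,]+o_\P(\sqrt{\log p})$ is the canonical argument. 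Two small points worth tightening: (i) your asymptotic $\min_{i\in S}\epsilon(i)=-(1+o_\P(1))\sqrt{2(1-\beta)\log p}$ presumes $s\to\infty$; at $\beta=1$ the signal set is bounded and the minimum is only $O_\P(1)$, which is still $o_\P(\sqrt{\log p})$ so the gap survives, but the case should be noted separately; (ii) you are right that part (1) can only be read in the minimax/least-favorable-prior sense (the literal statement is defeated by a constant estimator), and your conclusion is correctly the statement $\inf_{S}\P_S[\widehat S_p=S_p]\to0$ for every estimator --- it would be worth saying explicitly that this is the intended reading, as the cited monograph does.
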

The above result establishes the statistical limits of the anomaly identification problem known as the {\em exact support recovery} problem
as $p\to\infty$.  It shows that for $\beta$-sparse signals with amplitudes below the boundary $g(\beta)$, there are no estimators that can fully
recover the support.  At the same time if the amplitude is above that boundary, suitably calibrated thresholding procedures are optimal and
can recover the support exactly with probability converging to $1$ as $p\to\infty$.  Based on the concentration of maxima phenomenon, 
this phase transition phenomenon was shown to hold even for strongly dependent errors $\epsilon(i)$'s for the broad class of 
thresholding procedures (Theorem 4.2 in \cite{gao:stoev:2021}).
This recently developed theory shows that ``simple'' thresholding procedures are optimal in identifying sparse anomalies.  
It also shows the fundamental limits for the signal amplitudes as a function of sparsity where signals with insufficient 
amplitude cannot be fully identified in high dimension by {\em any} procedure. These results show that 
our thresholding-based algorithms for sparse anomaly {\em identification} are essentially optimal in high-dimensions
provided the background signal can be successfully filtered out.  More on the sub-optimality of certain popular {\em detection} procedures can be found in Section \ref{s:Q-stat}.

\section{Performance Evaluation}

Next, we provide experimental assessments of our methodology using both \emph{synthetic}
and \emph{real-world} traffic traces. 
We utilize two criteria to provide answers to the two-fold objective of our analysis. For the detection problem, we are only interested on whether any observation at a fixed time point is flagged as an anomaly. If so, the whole vector is treated as an anomalous vector 
and any F1-scores or ROC curves are ``global" (i.e., for any port) and time-wise. On the other hand, for the identification problem, we are also interested in correctly flagging the positions of the anomalous data. Namely, looking at a vector of observations, representing the port space at a specific time point, we want to correctly find which individual ports/streams include the anomalous activity. We report a percentage of correct anomaly identifications per time unit.

\subsection{Performance and Calibration using Synthetic Data}
\subsubsection{Linear Factor Model}

We simulate ``baseline'' traffic data of 5 weeks via the linear factor model of Eq.~\ref{eq:anom model}, 
obtaining one observation every 2 minutes. This leads to a total of 25200 observations. 
The baseline traffic is created as a fractional Gaussian noise (fGn) with Hurst parameter $H=0.9$ and variance set to 1, 
leading to a long range dependent sequence. 

Together with the fGn, five extra sinusoidal curves---representing traffic trends---are blended in to create the final time-series. These sinusoidal curves reflect the diurnal and weekly trends that show up in real world Internet traffic. 
The first two inserted trends are daily, the third one is weekly and the last two of them have a period of 6 and 4.8 hours respectively. These trends do not affect every port in the same way; they all have a random offset and only influence the ports determined by our factor matrix $B$. In this matrix every row represents a port and the column specifies the trend; a presence of 1 in the position $(i,j)$ means that the $j$-th trend is inserted in the $i$-th port. The matrix $B$ is created by having all elements of the first column equal to 1, meaning that all ports are affected by the first trend. The number of ones in the rest of the columns is equal to $(1-\lfloor (j-1)/k\rfloor)*100\%$, where $j$ is the index of the trend and $k$ the total number of them. The ones are distributed randomly in the columns. 

We simulate traffic data that pertain to $N$ distinct ports, leading to a matrix of observations of dimensions $N\times 25200$ (e.g., $N=100$). Since we are interested in tuning the parameters of our model, we create 5 independent replications of this dataset. The first two weeks of observations, namely the first 10080 observations, are used as the warm-up period where the initial batch PCA is run in order to determine the number of principal components we will work with in the incremental PCA and proceed with initialization. 

The anomaly is inserted in the start of the fourth week; the magnitude is 
determined by the {\em signal to noise ratio (snr)}
and the number of anomalous observations depends on the given {\em duration}. We have three choices of snr and two choices of duration in this simulation setup, leading to six combinations. The options for snr are 2, 3 and 7, leading to an additional traffic of size 2, 3 and 7 times the empirical standard deviation of the corresponding port the anomaly is inserted into. Moreover, the duration of the anomalies we use is 1 hour for our ``short'' anomalies, or 30 observations in this specific data, and 6 hours for our ``long'' anomalies; 180 observations respectively. The anomalies are always inserted in the first 3 simulated ports.


\subsubsection{Demonstration of Sequential PCA}
Figure~\ref{fig:angle_plot} illustrates the performance of the incremental PCA algorithm as a function of  
its memory parameter---the reciprocal of the current time index parameter in~\cite{arora2012stochastic,onlinePCA}.  
Since our methodology depends on how well we approximate the factor 
subspace, we need to measure distance between subspaces of $\R^p$.  We do so in terms of the
largest principal angle $\angle (\widehat W,W)$ (\cite{bjorck1973numerical,yu2015useful}) 
between the known true subspace $W$ and the estimated one $\widehat W$,  defined as:
$$
\angle (\widehat W,W):= {\rm acos}(\sigma_1),\ \mbox{ where}\ \sigma_1 = \mathop{\max}_{\substack{\hat{\vec{w}} \in \widehat W,\ \vec{w}\in W\\ \|\hat{\vec{w}}\| = \|\vec{w}\|=1}} \vec{w}^\top \widehat{\vec{w}}.$$ 
As we can see in Figure~\ref{fig:angle_plot}, there is a ``trade-off'' with regards to the length of the memory and the performance. If this parameter is chosen to be too big, the angle of the two subspaces becomes the largest. This could be explained as relying too much on the latest observations, estimating a constantly changing subspace. If the memory is too long, then the impact of new observations is minimal, leading to very little adaptability of our estimator. In Figure \ref{fig:angle_plot}, for the data used here, using 10 weeks worth of observations and 10 replications of the data, the best memory parameter seems to be $10^{-5}$. Moreover, as expected a full PCA estimated subspace has smaller angle than the iPCA estimated one (see Figure~\ref{fig:angle_plot}).

\begin{figure}[t]
\centering
\includegraphics[scale=0.6]{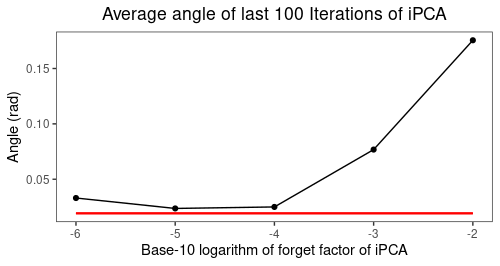}
\vspace{-7pt}
\caption{The largest principal angle between estimated and true subspace for iPCA (cf \cite{bjorck1973numerical,yu2015useful}). The horizontal line (in red) shows the 
largest principal angle between the true subspace and the batch PCA estimated one.}
\vspace{-10pt}
\label{fig:angle_plot}
\end{figure}

\subsubsection{Parameter Tuning: Operator's Guide}\label{subsec:operator guide}
Next, we proceed with hyper-parameter tuning for Algorithm~\ref{alg:method}.
Our grid search for tuning involves the following parameters and values. We have an EWMA memory parameter for the data ($\lambda$), keeping track of their mean, so that this mean can be utilized in the incremental PCA. We also have an EWMA parameter keeping track of the mean of the residuals ($\lambda_\mu$) in our algorithm. Both of these parameters are explored over the values  $10^{-2},10^{-3}$ and $10^{-4}.$ Additionally, there is an EWMA parameter for the sequential updating of the marginal variance of the residuals ($\lambda_\sigma$); the possible values we examine are $10^{-4},10^{-5}$ and $10^{-6}.$ Finally, we explore the values of control limit ($L$) that work best with the detection of anomalies in this data; we have a list of values including $0, 10^{-4},10^{-3},10^{-2}, 0.1, 0.5,1,2,3,4,5,6,7,20.$ 

The best choices for our tuning parameters are evaluated based on a combination of $F1$-score and AUC (area under curve) value and can be found on Table \ref{table:recommendations}. We start by selecting a combination of {\em snr} and {\em duration} and we find the AUC for each different combination of the rest of the tuning parameters with respect to $L$. We choose the parameters that lead to the largest AUC and then elect the control limit that corresponds to the largest $F1$-score. Our definition of $F1$-score is based on individual flagging of observations as discoveries or not; namely we look at the $15120\times100$ ``test'' observations as a vector of 1512000 observations and each one of them is categorized as true/false positive or true/false negative. 

\begin{table}[t]
\caption{Recommendations for the choice of tuning parameters
}\label{table:recommendations}
\centering
\begin{tabular}{rrrrrrrrrrrrr}
  \hline
 snr & duration & L & REG & ewma\_data & ewma\_mean & ewma\_var\\ 
  \hline
  2 & 1 & 5 & 4 & 0.0010 & 0.0001 & 0.00010 \\ 
  5 & 1 & 7 & 5 & 0.0010 & 0.0010 & 0.00001 \\ 
  7 & 1 & 7 & 5 & 0.0001 & 0.0100 & 0.00010 \\ 
  2 & 6 & 5 & 3 & 0.0001 & 0.0010 & 0.00010 \\ 
  5 & 6 & 7 & 5 & 0.0010 & 0.0010 & 0.00001 \\ 
  7 & 6 & 7 & 3 & 0.0001 & 0.0100 & 0.00010 \\ 
   \hline
\end{tabular}
\end{table}

\subsection{Sub-optimality of classic Chi-square statistic detection methods} \label{s:Q-stat}

{The detection problem can be understood transparently in the
signal-plus-noise setting.  Namely, thanks to Proposition \ref{p:prop1}, 
suppose that we observe a high-dimensional vector as $\vec x_t = \vec u_t + \vec\epsilon_t$, 
where $\vec u_t = (u_t(i))_{i=1}^p$ is a (possibly zero) anomaly signal. Then, the detection 
problem can be cast as a (multiple) testing problem 
$$
{\cal H}_0 \, :\, (u_t(i))_{i=1}^p = \vec{0}\ \mbox{ vs }\ {\cal H}_a\, :\, u_t(i)\not=0,\ \mbox{ for some }i\in [p].
$$
}
Under the assumption that $\vec{\epsilon}_t \sim N(0,\Sigma_{p\times p}),$ one 
popular statistic for detecting an anomaly (on any port), i.e., testing ${\cal H}_0$ is:
\[Q = \vec{x}^{\top}\Sigma^{-1}\vec{x}\sim\chi_p^2, \ {\rm under\ } \mathcal{H}_0.\]
Note that the statistic is a function of time, but from now on, for simplicity 
we shall suppress the dependence on $t$.  Using this statistic we have the following rule: 
We reject the $H_0$ at level $\alpha>0,$ if $Q>\chi_{p,1-\alpha}^2.$

Coming back to our algorithm, we have a control limit $L,$ which we can use to raise alerts in our alert matrix. Namely, we raise an alert on port $i$ at time $t$, if the $i$-th residual at time $t$ is bigger than $L$ times the marginal variance of these residuals. 
Note that at every time-point, Algorithm~\ref{alg:method} (i.e., the iPCA-based method) gives 
us individual alerts for anomalous ports, while the Q-statistic only 
flags presence/absence of an anomaly over the port dimension. To perform a fair comparison of the two methods, we 
use the following definition of true positives and false negatives for Algorithm~\ref{alg:method}. We declare an alert 
at time $t$ a true positive, if there is at least one anomaly in the port space at time $t$ and Algorithm~\ref{alg:method} raises 
at least one alarm at time $t$; not necessarily at a port where the anomaly is taking place. 
A false negative takes place if an anomaly is present at the port space, but no alerts are being 
raised across the port space at time $t$.

To compare the two methods, we start with an initial warm-up period which we apply a batch PCA approach to. 
Then, we use the estimate of the variance for the Q-statistic method, while we also obtain the number of principal
components to keep track of during the iPCA, using the number of principal components that explain 90\% of the 
variability of this warm sample. We choose the tuning parameters for iPCA based on the recommendations provided in 
Section \ref{subsec:operator guide}.

Figure \ref{fig: long duration} illustrates the performance of the two methods as the number of ports increases, 
while the sparsity of the anomalies in the simulated traffic remains 
the same. The {\em sparsity parameter} $\beta$ (see~\cite{gao:stoev:2021}) 
is used to control the sparsity of the anomalies; for a number of ports $p$ we 
insert anomalies in $\lfloor p^{1-\beta}\rfloor$ ports, where $\beta =3/4$. As is evident from the plots, Algorithm~\ref{alg:method} 
significantly outperforms the Q-statistic as the port dimension grows.  

This sub-optimality phenomenon is well-understood in the high-dimensional inference literature 
(see \cite{fan1996test} and Theorem 3.1 \cite{gao:stoev:2021}).  Namely, as $p\to\infty$, the $Q-$statistic 
based detection method will have vanishing power in detecting sparse anomalies in comparison with the optimal 
methods such as Tukey's higher-criticism statistic.  As shown in Theorem 3.1 \cite{gao:stoev:2021}, the 
thresholding methods like Algorithm~\ref{alg:method} are also optimal in the very sparse regime $\beta \in [3/4,1]$ in the sense that
they can discover anomalous signals with magnitudes down to the theoretically possible {\em detection boundary} 
(cf \cite{ingster:1998}). \textbf{This explains the growing superiority of our method as the port dimension 
increases, as depicted in Figure~\ref{fig: long duration}.}

\begin{figure}
    \centering
    \includegraphics[scale=0.33]{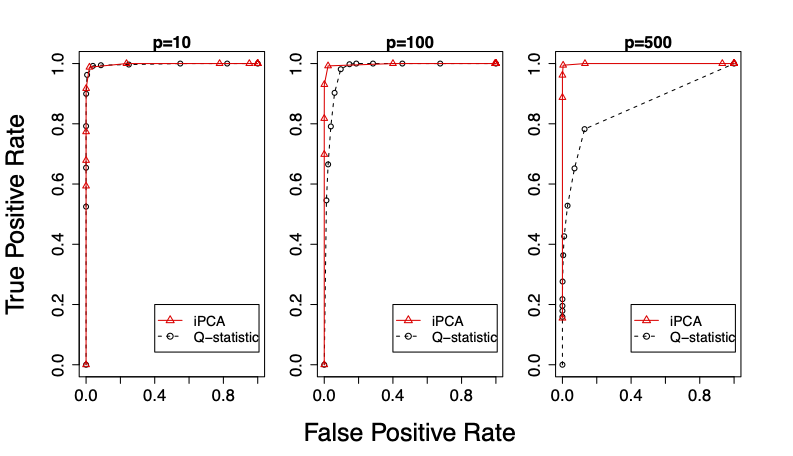}
    \vspace{-10pt}
    \caption{ROC curves of Algorithm~\ref{alg:method} (iPCA-based) and Q-statistic for long duration anomaly (6 hour) of low magnitude (snr=2) in synthetic data.}
    \label{fig: long duration}
\end{figure}

At the same time, the proposed iPCA-based method also has the advantage of providing us with extra information in comparison 
to the Q-statistic. Indeed, using the individual time/space alerts, one can identify what percentage of individual true positives/false negatives exist in the time/port space. An example of this information is shown on Table \ref{table:individual tpr}, where the ``rows'' variables refer to the detection problem and ``indiv'' variables refer to the identification one. Namely, the ``rows'' variables are only concerned with temporal events, while the ``indiv'' variables take into account the spatial structure as well.
\begin{table}[t]
\vspace{-10pt}
\caption{Performance of Algorithm~\ref{alg:method} on Synthetic Data.
Observe the high True Positive Rate (TPR) and low False Positive Rate (FPR)
for carefully chosen thresholds $L$.}\label{table:individual tpr}
\centering
\begin{tabular}{rrrrrr}
  \hline
 L & tpr\_rows & fpr\_rows  & tpr\_indiv & fpr\_indiv  \\ 
  \hline
 \textbf{2} & 1.00 & 1.00 & 1.00 & 1.00  \\ 
 \textbf{3} & 1.00 & 0.92 & 1.00 & 0.92  \\ 
 \textbf{4} & 1.00 & 0.11 & 0.99 & 0.11  \\ 
 \textbf{5} & 1.00 & 0.00 & 0.97 & 0.00 \\ 
 \textbf{6} & 1.00 & 0.00 & 0.93 & 0.00 \\ 
 \textbf{7} & 1.00 & 0.00 & 0.87 & 0.00 \\ 
 \textbf{20} & 0.08 & 0.00 & 0.02 & 0.00 \\ 
  \hline
\end{tabular}
\end{table}

\subsection{Application to real-world Darknet data}
Finally, we demonstrate the performance of our method on real-world data
obtained from the ORION Network Telescope~\cite{orion}.
The dataset spans the entire month of September 2016
and includes the early days of the infamous Mirai botnet~\cite{203628}.
We constructed minute-wise time series for all TCP/UDP ports present in our data
that represent the number of unique sources targeting a port at a given time.
For the analysis here, we focused on the top-50 ports based on their frequency in the
duration of a month. We implement the algorithm on the data by utilizing the calibrated tuning parameters suggested in \ref{subsec:operator guide}. We use the first 5000 observations, roughly three and a half days, as our burn-in period to initialize the algorithm. Thus, we have a matrix of $43200\times50$ Darknet data observations. 

The selected dataset includes important security incidents.
\textbf{Indeed, as Figure~\ref{fig: darknet ports} shows,
we detect the onset of scanning activities against TCP/2323 (Telnet for IoT) around September 6th
that can be attributed to the Mirai botnet}. We also detect some interesting 
ICMP-related activities (that we represent as port-0) towards the end of the month. We
are unsure of exactly what malicious acts these activities represent; however, upon payload inspection we found
them to be related with some heavy DNS-related scans (payloads with DNS queries were encapsulated
in the ICMP payloads).

In Figures \ref{fig: darknet ports} and \ref{fig:alert_matrix}, we are looking for large anomalies (snr=7) of long duration (6 hours).
Moreover, inspecting the full alert matrix in Figure~\ref{fig:alert_matrix}, we 
observe a few more alerts for anomalies that deserve further investigation. Some of them might
be false positives, although there is no definitive ``ground truth" in real-world data
and all alerts merit some further forensics analysis. The encouraging observations 
from Figures~\ref{fig: darknet ports}and~\ref{fig:alert_matrix} are that 
the incidents beknownst to us are revealed, and that
the elected hyper-parameters
avoid causing the so-called ``alert-fatigue" to the analysts. At the same time, 
the analysts could tune Algorithm~\ref{alg:method} to their preferences,
and prioritize which alerts to further investigate based on attack severity, duration,
number of concurrently affected ports, etc. All this information is readily available
from our methodology.

\begin{figure}[t]
    \centering
    \includegraphics[scale=0.43]{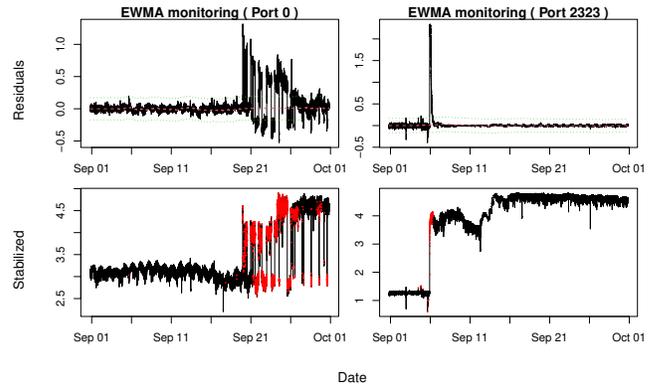}
    \vspace{-10pt}
    \caption{Residuals and detection  boundaries (top) and raw traffic (bottom) for ports 0 and 2323. Red color in bottom plots shows the detected anomalies.}
    \label{fig: darknet ports}
\end{figure}

\begin{figure}[t]
\centering
\includegraphics[scale=0.42]{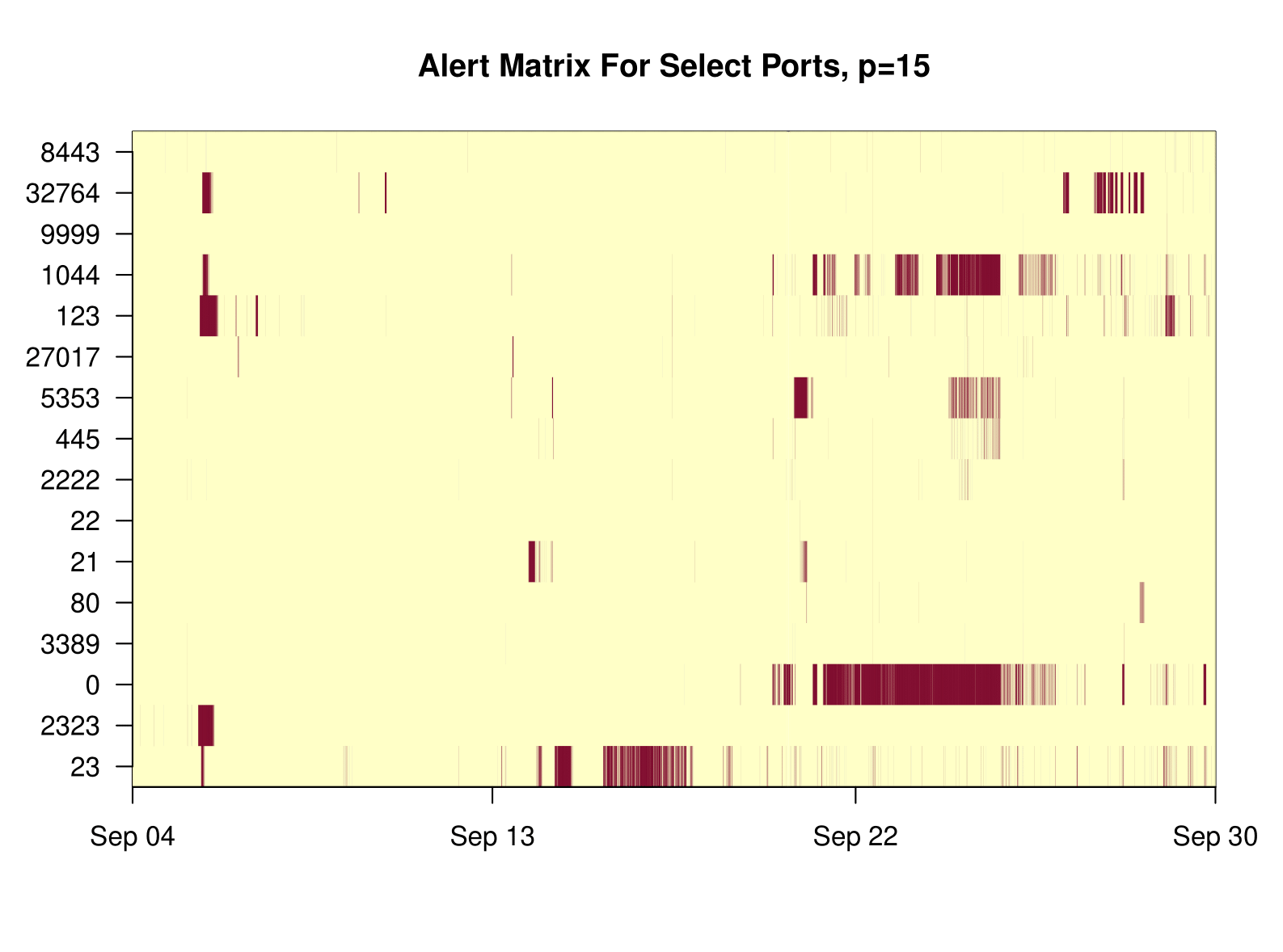}
\caption{Alert matrix for the Darknet dataset for September 2016 (Mirai onset). Red color indicates detection.}
\label{fig:alert_matrix}
\end{figure}


\bibliographystyle{IEEEtran}
\bibliography{icc2023}

\begin{thebibliography}{10}
\providecommand{\url}[1]{#1}
\csname url@samestyle\endcsname
\providecommand{\newblock}{\relax}
\providecommand{\bibinfo}[2]{#2}
\providecommand{\BIBentrySTDinterwordspacing}{\spaceskip=0pt\relax}
\providecommand{\BIBentryALTinterwordstretchfactor}{4}
\providecommand{\BIBentryALTinterwordspacing}{\spaceskip=\fontdimen2\font plus
\BIBentryALTinterwordstretchfactor\fontdimen3\font minus
  \fontdimen4\font\relax}
\providecommand{\BIBforeignlanguage}[2]{{%
\expandafter\ifx\csname l@#1\endcsname\relax
\typeout{** WARNING: IEEEtran.bst: No hyphenation pattern has been}%
\typeout{** loaded for the language `#1'. Using the pattern for}%
\typeout{** the default language instead.}%
\else
\language=\csname l@#1\endcsname
\fi
#2}}
\providecommand{\BIBdecl}{\relax}
\BIBdecl

\bibitem{10.1145/2663716.2663717}
J.~Czyz, M.~Kallitsis, M.~Gharaibeh, C.~Papadopoulos, M.~Bailey, and M.~Karir,
  ``Taming the 800 pound gorilla: The rise and decline of {NTP DDoS} attacks,''
  in \emph{Proceedings of the 2014 Conference on Internet Measurement
  Conference}, 2014.

\bibitem{AmplificationHell}
C.~Rossow, ``{Amplification Hell: Revisiting Network Protocols for DDoS
  Abuse},'' in \emph{Proceedings of the 2014 Network and Distributed System
  Security (NDSS) Symposium}, February 2014.

\bibitem{Kuhrer:2014:EHR:2671225.2671233}
M.~K\"{u}hrer, T.~Hupperich, C.~Rossow, and T.~Holz, ``Exit from hell? reducing
  the impact of amplification ddos attacks,'' in \emph{Proceedings of the 23rd
  USENIX Conference on Security Symposium}, 2014.

\bibitem{203628}
M.~Antonakakis, T.~April, M.~Bailey, M.~Bernhard, E.~Bursztein, J.~Cochran,
  Z.~Durumeric, J.~A. Halderman, L.~Invernizzi, M.~Kallitsis \emph{et~al.},
  ``Understanding the mirai botnet,'' in \emph{26th USENIX Security Symposium
  (USENIX Security 17)}, 2017.

\bibitem{Krebbs}
B.~Krebs, ``{KrebsOnSecurity Hit With Record DDoS},''
  \url{https://krebsonsecurity.com/2016/09/krebsonsecurity-hit-with-record-ddos/}.

\bibitem{OVH}
P.~Paganini, ``{The hosting provider OVH continues to face massive DDoS attacks
  launched by a botnet composed at least of 150000 IoT devices.}''
  \url{http://securityaffairs.co/wordpress/51726/cyber-crime/ovh-hit-botnet-iot.html}.

\bibitem{dyn}
N.~Perlroth, ``{Hackers Used New Weapons to Disrupt Major Websites Across
  U.S.}''
  \url{http://www.nytimes.com/2016/10/22/business/internet-problems-attack.html}.

\bibitem{mozi2020}
A.~Klopsch, C.~Dietrich, and R.~Springer, ``{A detailed look into the Mozi P2P
  IoT botnet},'' \url{https://www.botconf.eu/botconf-2020/schedule/}, 2020.

\bibitem{krebbs2021}
B.~Krebs, ``{KrebsOnSecurity Hit By Huge New IoT Botnet Meris},''
  \url{https://krebsonsecurity.com/2021/09/krebsonsecurity-hit-by-huge-new-iot-botnet-meris/},
  2021.

\bibitem{cisa-covid19}
{DHS CISA} and NCSC, ``{COVID-19 Exploited by Malicious Cyber Actors},''
  \url{https://www.cisa.gov/uscert/ncas/alerts/aa20-099a}, 2020.

\bibitem{caida_telescope_report}
D.~Moore, C.~Shannon, G.~Voelker, and S.~Savage, ``Network telescopes:
  Technical report,'' Cooperative Association for Internet Data Analysis
  (CAIDA), Tech. Rep., Jul 2004.

\bibitem{orion}
{Merit Network, Inc.}, ``{ORION: Observatory for Cyber-Risk Insights and
  Outages of Networks},''
  \url{https://www.merit.edu/initiatives/orion-network-telescope/}, 2022.

\bibitem{10.1145/1028788.1028794}
R.~Pang, V.~Yegneswaran, P.~Barford, V.~Paxson, and L.~Peterson,
  ``Characteristics of internet background radiation,'' in \emph{Proceedings of
  the 4th ACM SIGCOMM Conference on Internet Measurement}, 2004.

\bibitem{10.1145/1879141.1879149}
E.~Wustrow, M.~Karir, M.~Bailey, F.~Jahanian, and G.~Huston, ``Internet
  background radiation revisited,'' in \emph{Proceedings of the 10th ACM
  SIGCOMM Conference on Internet Measurement}, 2010.

\bibitem{10.3150/20-BEJ1197}
Z.~Gao and S.~Stoev, ``{Fundamental limits of exact support recovery in high
  dimensions},'' \emph{Bernoulli}, vol.~26, no.~4, pp. 2605 -- 2638, 2020.

\bibitem{gao:stoev:2021}
------, \emph{Concentration of Maxima and Fundamental Limits in
  High-Dimensional Testing and Inference}, ser. SpringerBriefs in Probability
  and Mathematical Statistics.\hskip 1em plus 0.5em minus 0.4em\relax Cham:
  Springer, 2021.

\bibitem{6483308}
R.~Arora, A.~Cotter, K.~Livescu, and N.~Srebro, ``Stochastic optimization for
  pca and pls,'' in \emph{2012 50th Annual Allerton Conference on
  Communication, Control, and Computing (Allerton)}, 2012, pp. 861--868.

\bibitem{10.1145/1030194.1015492}
A.~Lakhina, M.~Crovella, and C.~Diot, ``Diagnosing network-wide traffic
  anomalies,'' \emph{SIGCOMM Comput. Commun. Rev.}, 2004.

\bibitem{6717049}
A.~Dainotti, A.~King, K.~Claffy, F.~Papale, and A.~Pescapé, ``Analysis of a
  /0; stealth scan from a botnet,'' \emph{IEEE/ACM Transactions on Networking},
  vol.~23, no.~2, pp. 341--354, April 2015.

\bibitem{Benson:2012:GIA:2413247.2413285}
K.~Benson, A.~Dainotti, k.~Claffy, and E.~Aben, ``Gaining insight into as-level
  outages through analysis of internet background radiation,'' ser. CoNEXT
  Student 2012, 2012.

\bibitem{Dainotti:2012:EBH:2096149.2096154}
A.~Dainotti, R.~Amman, E.~Aben, and K.~C. Claffy, ``Extracting benefit from
  harm: Using malware pollution to analyze the impact of political and
  geophysical events on the internet,'' \emph{SIGCOMM CCR 2012}.

\bibitem{Moore:2006:IID:1132026.1132027}
D.~Moore, C.~Shannon, D.~J. Brown, G.~M. Voelker, and S.~Savage, ``Inferring
  internet denial-of-service activity,'' \emph{ACM Trans. Comput. Syst.}

\bibitem{Jonker:2017:MTU:3131365.3131383}
M.~Jonker, A.~King, J.~Krupp, C.~Rossow, A.~Sperotto, and A.~Dainotti,
  ``Millions of targets under attack: A macroscopic characterization of the dos
  ecosystem,'' in \emph{Proceedings of the 2017 IMC}, 2017.

\bibitem{8450404}
F.~Shaikh, E.~Bou-Harb, J.~Crichigno, and N.~Ghani, ``A machine learning model
  for classifying unsolicited iot devices by observing network telescopes,'' in
  \emph{2018 14th International Wireless Communications and Mobile Computing
  Conference (IWCMC)}, 2018, pp. 938--943.

\bibitem{Czyz:2013:UII:2504730.2504732}
J.~Czyz, K.~Lady, S.~G. Miller, M.~Bailey, M.~Kallitsis, and M.~Karir,
  ``Understanding {IPv6} {I}nternet background radiation,'' in \emph{IMC 2013},
  2013.

\bibitem{niranjana2020darknet}
R.~Niranjana, V.~A. Kumar, and S.~Sheen, ``Darknet traffic analysis and
  classification using numerical agm and mean shift clustering algorithm,''
  \emph{SN Computer Science}, vol.~1, no.~1, pp. 1--10, 2020.

\bibitem{ban2012behavior}
T.~Ban, L.~Zhu, J.~Shimamura, S.~Pang, D.~Inoue, and K.~Nakao, ``Behavior
  analysis of long-term cyber attacks in the darknet,'' in \emph{International
  Conference on Neural Information Processing}, 2012.

\bibitem{bou2014multidimensional}
E.~Bou-Harb, N.-E. Lakhdari, H.~Binsalleeh, and M.~Debbabi, ``Multidimensional
  investigation of source port 0 probing,'' \emph{Digital Investigation},
  vol.~11, pp. S114--S123, 2014.

\bibitem{nishikaze2015large}
H.~Nishikaze, S.~Ozawa, J.~Kitazono, T.~Ban, J.~Nakazato, and J.~Shimamura,
  ``Large-scale monitoring for cyber attacks by using cluster information on
  darknet traffic features,'' \emph{Procedia Computer Science}, vol.~53, pp.
  175--182, 2015.

\bibitem{ban2016towards}
T.~Ban, S.~Pang, M.~Eto, D.~Inoue, K.~Nakao, and R.~Huang, ``Towards early
  detection of novel attack patterns through the lens of a large-scale
  darknet,'' in \emph{2016 Intl IEEE UIC/ATC/ScalCom/CBDCom/IoP/SmartWorld},
  2016.

\bibitem{zakroum2022monitoring}
M.~Zakroum, J.~Fran{\c{c}}ois, I.~Chrisment, and M.~Ghogho, ``Monitoring
  network telescopes and inferring anomalous traffic through the prediction of
  probing rates,'' \emph{IEEE Transactions on Network and Service Management},
  2022.

\bibitem{willinger2001long}
W.~Willinger, V.~Paxon, and R.~Reidi, ``Long-range dependence and data network
  traffic. long-range dependence: Theory and applications, p. doukhan, g.
  oppenheim and ms taqqu, eds,'' 2001.

\bibitem{10.2307/1269835}
J.~M. Lucas, M.~S. Saccucci, R.~V. Baxley, W.~H. Woodall, H.~D. Maragh, F.~W.
  Faltin, G.~J. Hahn, W.~T. Tucker, J.~S. Hunter, J.~F. MacGregor, and T.~J.
  Harris, ``Exponentially weighted moving average control schemes: Properties
  and enhancements,'' \emph{Technometrics}, 1990.

\bibitem{candes2012exact}
E.~Candes and B.~Recht, ``Exact matrix completion via convex optimization,''
  \emph{Communications of the ACM}, 2012.

\bibitem{candes2011robust}
E.~J. Cand{\`e}s, X.~Li, Y.~Ma, and J.~Wright, ``Robust principal component
  analysis?'' \emph{Journal of the ACM (JACM)}, vol.~58, no.~3, p.~11, 2011.

\bibitem{xu2010robust}
H.~Xu, C.~Caramanis, and S.~Sanghavi, ``Robust pca via outlier pursuit,'' in
  \emph{Advances in Neural Information Processing Systems}, 2010.

\bibitem{zhou2010stable}
Z.~Zhou, X.~Li, J.~Wright, E.~Candes, and Y.~Ma, ``Stable principal component
  pursuit,'' in \emph{Information Theory Proceedings (ISIT), 2010 IEEE
  International Symposium on}.\hskip 1em plus 0.5em minus 0.4em\relax IEEE,
  2010, pp. 1518--1522.

\bibitem{feng2013online}
J.~Feng, H.~Xu, and S.~Yan, ``Online robust pca via stochastic optimization,''
  in \emph{Advances in Neural Information Processing Systems}, 2013.

\bibitem{ingster:1998}
Y.~I. Ingster, ``Minimax detection of a signal for $\ell^p_n$-balls,''
  \emph{Mathematical Methods of Statistics}, vol.~7, no.~4, pp. 401--428, 1998.

\bibitem{donoho:jin:2004}
D.~Donoho and J.~Jin, ``Higher {C}riticism for detecting sparse heterogeneous
  mixtures,'' \emph{The Annals of Statistics}, vol.~32, no.~3, pp. 962--994,
  2004.

\bibitem{butucea:ndaoud:stepanova:tsybakov:2018}
C.~Butucea, M.~Ndaoud, N.~A. Stepanova, and A.~B. Tsybakov, ``Variable
  selection with {H}amming loss,'' \emph{The Annals of Statististics}, vol.~46,
  no.~5, pp. 1837--1875, 2018.

\bibitem{arora2012stochastic}
R.~Arora, A.~Cotter, K.~Livescu, and N.~Srebro, ``Stochastic optimization for
  pca and pls,'' in \emph{2012 50th Annual Allerton Conference on
  Communication, Control, and Computing (Allerton)}.\hskip 1em plus 0.5em minus
  0.4em\relax IEEE, 2012, pp. 861--868.

\bibitem{onlinePCA}
D.~Degras and H.~Cardot, ``{onlinePCA: Online Principal Component Analysis},''
  \url{https://rdrr.io/cran/onlinePCA/}, 2019.

\bibitem{bjorck1973numerical}
A.~Bjorck, G.~H. Golub \emph{et~al.}, ``Numerical methods for computing angles
  between linear subspaces,'' \emph{Mathematics of computation}, 1973.

\bibitem{yu2015useful}
Y.~Yu, T.~Wang, and R.~J. Samworth, ``A useful variant of the davis-kahan
  theorem for statisticians,'' \emph{Biometrika}, 2015.

\bibitem{fan1996test}
J.~Fan, ``Test of significance based on wavelet thresholding and neyman's
  truncation,'' \emph{Journal of the American Statistical Association}, 1996.

\bibitem{meyer2000matrix}
C.~D. Meyer, \emph{Matrix analysis and applied linear algebra}.\hskip 1em plus
  0.5em minus 0.4em\relax Siam, 2000, vol.~71.

\end{thebibliography}

\section{Appendix}\label{appendix}
\subsection{Technical results for Section~\ref{sec:methodology}}
\subsubsection{Proof for Proposition~\ref{p:prop1}}

We introduce some further notations before we present the proof.

Let $\vec{b(j)}$ be the $j$-th column of $B$, $j\in[k]$; $\vec{b_0(j)} = \vec{b(j)} / \|\vec{b(j)}\|$ the factor loadings normalized;  $B_0 := [\vec{b_0(1)},\dots, \vec{b_0(k)}]$.  We shall denote by $\widehat B$ and $\widehat B_0$ the corresponding estimates of $B$ and $B_0$
obtained for example by the iPCA algorithm.  

For the purpose of the following theoretical results, we shall assume that
$\widehat B$ and $\widehat B_0$ are obtained using singular value decomposition of the sample covariance matrix
\begin{equation} \label{eq:Sigma-hat}
    \widehat \Sigma := \frac{1}{n} \sum_{t=1}^n \vec{x}_t \vec{x}_t^\top.
\end{equation}

\begin{proof}
Residuals $\vec{r}$ from projecting observations onto the space spanned by the first $k$ eigenvectors, are
\begin{align*}
\vec{r} &= (\mathcal I_p - \mathcal P_{\widehat{B_0}})(B\vec{f} + \vec{\epsilon} + \vec{u})\\
  &= (\mathcal I_p - \mathcal P_{\widehat{B_0}})B\vec{f} + (\mathcal I_p - \mathcal P_{\widehat{B_0}})\vec{\epsilon} + (\mathcal I_p - \mathcal P_{\widehat{B_0}})\vec{u}
\end{align*}
or equivalently, we can write 
\begin{align*}
\vec{r} -\vec{\epsilon}-\vec{u}&=  \underbrace{(\mathcal I_p - \mathcal P_{\widehat{B_0}})B\vec{f}}_{I} - \underbrace{ \mathcal P_{\widehat{B_0}}\vec{\epsilon}}_{I\!I} - \underbrace{\mathcal P_{\widehat{B_0}}\vec{u}}_{I\!I\!I}
\end{align*}

We will establish upper bounds on all three terms. The first term,
\begin{align*}
{I} &= (\mathcal I_p - \mathcal P_{{B_0}} + \mathcal P_{{B_0}} - \mathcal P_{\widehat{B_0}})B\vec{f} \\
                &= (\mathcal I_p - \mathcal P_{{B_0}})B\vec{f} + (\mathcal P_{{B_0}} - \mathcal P_{\widehat{B_0}})B\vec{f} \\
                &= -(\mathcal P_{\widehat{B_0}}-\mathcal P_{{B_0}})B\vec{f}.
\end{align*}

Using the relationship \[\|B\|\le \sqrt{p}\max_{i=1,\hdots,p}\left(\sum_{j=1}^kB_{i,j}^2\right)^{1/2}\]
and \eqref{A2} we obtain that $\|B\|\le\sqrt{pk}C$. By Corollary \ref{cl:bound} discussed below,
\begin{align}\label{term I bound}
&\E\|(\mathcal P_{\widehat{B_0}} - \mathcal P_{B_0})B\vec{f}\|
    \le  \E \Big[ \|(\mathcal P_{\widehat{B_0}} - \mathcal P_{B_0})\| \|B\| \|\vec{f}\| \Big] \nonumber\\
    &\quad\quad \le \|B\|\|\vec{f}\| \Big(\E \|\mathcal P_{\widehat{B_0}} - \mathcal P_{B_0}\|^2\Big)^{1/2}  \nonumber\\
    &\quad\quad \le \sqrt{pk}C\frac{c_f\sqrt{k} }{\lambda_{\mathrm{min}}}2\sqrt{k}\left(\mathbb{E}\|\widehat\Sigma-\Sigma\|^2\right)^{1/2}  \; \nonumber\\
    &\quad\quad \le \frac{2c_fCk\sqrt{pk}}{\phi(p)}\left(\mathbb{E}\|\widehat\Sigma-\Sigma\|^2\right)^{1/2}
\end{align}
where we used \eqref{A1} in the last inequality and which also provides a bound on each coordinate.

For the second term, $I\!I$, decomposed as $\vec{\epsilon} - \mathcal P_{B_0}\vec{\epsilon} - (\mathcal P_{\widehat{B_0}} - \mathcal P_{B_0})\vec{\epsilon}$, notice that $\mathcal P_{B_0}\vec{\epsilon} = B_0B_0^{\top}\vec{\epsilon} \stackrel{d}{\sim} B_0\vec{\epsilon}_k$ where $\E\vec{\epsilon}_k = 0$ and $\E\vec{\epsilon}_k\vec{\epsilon}_k^{\top}=I_k$. The second moment of the $i$th coordinate is bounded by
$$
\E(\mathcal P_{B_0}\vec{\epsilon})_i^2 = \E(\sum_{j=1}^k B_0(i,j)\epsilon_k(j))^2 \le \frac{kC^2}{\lambda_{\mathrm{min}}}
$$
because $|B_0(i,j)|\le C/\sqrt{\lambda_{\mathrm{min}}}$. While in $\ell_2$ norms, $\E\|\mathcal P_{B_0}\vec{\epsilon}\|^2 = \E\mathrm{tr}(B_0B_0^{\top}\vec{\epsilon}\vec{\epsilon}^{\top}B_0B_0^{\top}) = k$.
The last part is controlled similarly as in $I$.  Indeed, by the Cauchy-Schwartz inequality, 
Corollary \ref{cl:bound} and \eqref{A1}, we obtain
\begin{align*}
\Big(\E\|(\mathcal P_{\widehat{B_0}} - \mathcal P_{B_0})\vec{\epsilon}\| \Big)^2 &\le \E \|\mathcal P_{\widehat{B_0}} - \mathcal P_{B_0}\|^2 \E\|\vec{\epsilon}\|^2 \\
& \le \frac{4k\mathbb{E}\|\widehat \Sigma - \Sigma\|^2 p}{\phi(p)^2}.
\end{align*}

For the  third term, we need the von Neumann's trace inequality: for two matrices $X$ and $Y$, $|\mbox{tr}(XY)| \le \langle \vec{\sigma}(X), \vec{\sigma}(Y)\rangle$, where $\vec{\sigma}(X)$, $\vec{\sigma}(Y)$ are vectors of the singular values of $X$ and $Y$. By Holder inequality, $|\mbox{tr}(XY)| \le \|\vec{\sigma}(X)\|_1\|\vec{\sigma}(Y)\|_\infty = \mbox{tr}(X)\|Y\|$, for real symmetric positive definite matrices. 

Decompose $I\!I\!I$ as $\vec{u} - \mathcal P_{B_0}\vec{u} - (\mathcal P_{\widehat{B_0}} - \mathcal P_{B_0})\vec{u}$. The second moment of the $l$-th coordinate
\begin{align*}
\E(\mathcal P_{B_0}u)_l^2 &= \E(\sum_{i=1}^p u_i\underbrace{(\sum_{j=1}^k B_0(lj)B_0(ij))}_{a_{il}})^2 \\
        &= \E(\sum_{i=1}^p a_{il} u_i)^2 = \mbox{tr}(\vec{a}_l\vec{a}_l^{\top}\E \vec{u}\vec{u}^{\top}) \\
        &\le \mbox{tr}(\vec{a}_l\vec{a}_l^{\top})\|\Sigma_u\| \le \frac{k^2C^4p\|\Sigma_u\|}{\lambda_{\mathrm{min}}^2},
\end{align*}
where $\vec{a}_l = (a_{1l},\dots,a_{pl})^{\top}$, and the last inequality because $|a_{il}|\le kC^2/\lambda_{\mathrm{min}}$.
The last part in $I\!I\!I$, again using Corollary~\ref{cl:bound} and \eqref{A1}, is bounded by
\begin{align*}
(\E\|(\mathcal P_{\widehat{B_0}} - \mathcal P_{B_0})\vec{u}\|)^2 & \le \E\|\mathcal P_{\widehat{B_0}} - \mathcal P_{B_0}\|^2 \E\mbox{tr}(\vec{u}\vec{u}^{\top}) \\ &\le \frac{4k\E\|\widehat \Sigma-\Sigma\|^2\mbox{tr}(\Sigma_u)}{\phi(p)^2}    
\end{align*}

Taken together,
\begin{align*}
    \E(\vec{r} - \vec{\epsilon} - \vec{u})_i^2 \le \left[\sqrt{\E(I_i)^2}+\sqrt{\E(I\!I_i)^2}+\sqrt{\E(I\!I\!I_i)^2}\right]^2\\ \le \Bigg [\frac{2c_fCk\sqrt{pk}}{\phi(p)}\left(\E\|\widehat\Sigma-\Sigma\|^2\right)^{1/2}  + \frac{2\sqrt{kp}}{\phi(p)}\left(\E\|\widehat\Sigma-\Sigma\|^2\right)^{1/2} + \\  \frac{\sqrt{k}C}{\phi(p)}+ \frac{kC^2\sqrt{p}\|\Sigma_u\|^{1/2}}{\phi(p)} + \frac{2\sqrt{k}\left(\E\|\widehat\Sigma-\Sigma\|^2\right)^{1/2} \sqrt{\mbox{tr}(\Sigma_u)}}{\phi(p)}\Bigg]^2 \\
    = \Bigg[\left(\E\|\widehat\Sigma-\Sigma\|^2\right)^{1/2}\frac{2\sqrt{kp}}{\phi(p)}\left(c_fCk+1+\sqrt{\frac{\mbox{tr}(\Sigma_u)}{p}}\right)\\ + \frac{\sqrt{k}C}{\phi(p)}\left(1+\sqrt{kp}C\|\Sigma_u\|^{1/2}\right)\Bigg]^2.
\end{align*}

\subsubsection{Distance between $\Sigma$ and $\hat \Sigma$}
Looking at Proposition \ref{p:prop1}, one can see that we need a bound on 
\[\E \|\widehat{\Sigma}-\Sigma\|^2.\]
Recall that the signal $u_t$ is assumed random such that 
$$
\Sigma_u := \E[u_t u_t^\top],
$$
while by \eqref{eq:anom model}, $x_t=Bf_t+u_t+\epsilon_t.$ Let also $\Sigma_\epsilon = \E[ \epsilon \epsilon ^\top]$ and $\Sigma := B^\top B$.

When we see some data contaminated with sparse anomalies, we can estimate
\begin{align*}
\widehat \Sigma  = \frac{1}{n}\sum_{t=1}^nx_tx_t^{\top} &= \Sigma + B \frac{1}{n}\sum_{t=1}^n (f_t f_t^\top -{\cal I}_k) B^\top 
+ \frac{1}{n}\sum_{t=1}^n u_t u_t^\top \\ &+ \frac{1}{n}\sum_{t=1}^n \epsilon_t \epsilon_t^\top 
 + B \frac{1}{n} \sum_{t=1}^n f_t(u_t + \epsilon_t)^\top  \\ &+  \frac{1}{n} \sum_{t=1}^n (u_t + \epsilon_t) f_t^\top B^\top\\
& =: \Sigma + B R_f B^\top + \hat\Sigma_u + \hat \Sigma_\epsilon + B R + R^\top B^\top\\ &+\frac{1}{n}\sum_{t=1}^n\epsilon_tu_t^{\top}+\frac{1}{n}\sum_{t=1}^n u_t\epsilon_t^{\top},
\end{align*}
where 
\begin{align*}
    R_f & := \frac{1}{n}\sum_{t=1}^n (f_t f_t^\top -{\cal I}_k)\\ 
    R&:= \frac{1}{n} \sum_{t=1}^n f_t(u_t + \epsilon_t)^\top\\
    \hat \Sigma_{\epsilon} &:= \frac{1}{n}\sum_{t=1}^n \epsilon_t \epsilon_t^\top \\
    \hat \Sigma_{u} &:= \frac{1}{n}\sum_{t=1}^n u_t u_t^\top. 
\end{align*}
In order to obtain upper bounds
on the squared 
expectation of the operator norm $\E[ \| \widehat \Sigma - \Sigma\|^2],$
 we will need upper bounds on $\E \|R_f\|^2, \E[ \| \widehat \Sigma_u\|^2]$, $\E[ \| \widehat \Sigma_\epsilon\|^2]$, $\E[ \|R\|^2]$ and $\E\|\frac{1}{n}\sum_{t=1}^n\epsilon_tu_t^{\top}\|^2.$







We start with $\E\|\hat\Sigma_\epsilon\|^2$
and we assume that $\epsilon_t = (\epsilon_t(i))_{i=1}^p,\ t=1,\cdots,n$ is such that $\{\epsilon_t(i)\}$ are independent in $i=1,\cdots,p$ but long-range dependent
in $t$ zero-mean Gaussian time series. Namely, suppose that 
\begin{equation}\label{e:C-epsilon}
C_\epsilon(t):={\rm Cov}(\epsilon_{t+s}(i),\epsilon_s(i)) \sim \sigma^2 |t|^{2H-2},\ \ \mbox{ as }t\to \infty, 
\end{equation}
where $H\in (1/2,1)$ is the Hurst long-range dependence parameter of the time-series.  Consequently, the auto-covariance $C_\epsilon(t)$
is non-summable, i.e.,
$$
\sum_{t=0}^\infty |C_\epsilon(t)| = \infty.
$$

We will need the following auxiliary lemma. 
\begin{lem}\label{le:op norm C_n,e} Let $C_{n,\epsilon}:= (C_\epsilon(t-s))_{n\times n}$, where $C_\epsilon$ is as in \eqref{e:C-epsilon}.  Then, as $n\to\infty$,
we have
$$
\|C_{n,\epsilon}\| = {\cal O} \Big( n^{(2H-1)\vee 1/2} \Big( 1+ 1_{\{H=3/4\}} \log(n) \Big)\Big)
$$
\end{lem}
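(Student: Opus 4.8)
The plan is to bound the spectral norm of the symmetric Toeplitz matrix $C_{n,\epsilon}$ by its Frobenius norm and then evaluate the latter explicitly. Since $C_{n,\epsilon}$ is the covariance matrix of $(\epsilon_1(i),\dots,\epsilon_n(i))$, it is real symmetric (indeed positive semidefinite), so $\|C_{n,\epsilon}\| = \lambda_{\max}(C_{n,\epsilon}) \le \|C_{n,\epsilon}\|_F = \big({\rm tr}(C_{n,\epsilon}^2)\big)^{1/2}$. Writing the trace as a sum over entries and grouping the pairs $(s,t)$ by the diagonal $h=t-s$, of which there are exactly $n-|h|$ for each $|h|<n$, gives
\begin{equation*}
\|C_{n,\epsilon}\|_F^2 = \sum_{|h|<n}(n-|h|)\,C_\epsilon(h)^2 \le n\Big(C_\epsilon(0)^2 + 2\sum_{h=1}^{n-1}C_\epsilon(h)^2\Big).
\end{equation*}

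Next I would insert the power-law decay. The hypothesis \eqref{e:C-epsilon} gives $|C_\epsilon(h)|\le K h^{2H-2}$ for all $h\ge 1$ and some constant $K$ (the finitely many small-lag values are absorbed into $K$), hence $C_\epsilon(h)^2 \le K^2 h^{4H-4}$. The behaviour of the sum $\sum_{h=1}^{n-1}h^{4H-4}$ is governed by whether the exponent $4H-4$ exceeds $-1$, i.e. by the sign of $H-3/4$. A three-way case analysis then yields: the sum converges for $H<3/4$; it is ${\cal O}(\log n)$ for $H=3/4$; and it is ${\cal O}(n^{4H-3})$ for $H>3/4$. Multiplying by the outer factor $n$ gives $\|C_{n,\epsilon}\|_F^2 = {\cal O}(n)$, ${\cal O}(n\log n)$, and ${\cal O}(n^{4H-2})$ respectively, and taking square roots recovers the stated bound. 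Note in particular that the diagonal contribution $n\,C_\epsilon(0)^2 = {\cal O}(n)$ is present regardless of $H$ and, after the square root, produces the floor $n^{1/2}$ accounting for the ``$\vee\,1/2$'' in the exponent; it dominates precisely when $H<3/4$. The logarithmic factor at $H=3/4$ is exactly the signature of the critical exponent $4H-4=-1$, and since $(\log n)^{1/2}\le \log n$ for large $n$ this is (slightly generously) reported as the $1_{\{H=3/4\}}\log n$ term. Combining the three regimes gives the claim.

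I do not expect a genuine obstacle here; the argument is elementary once the Frobenius reduction is made. The only points requiring care are (i) handling the small-lag terms where the asymptotic in \eqref{e:C-epsilon} has not yet taken hold, which is a routine use of the constant $K$, and (ii) correctly tracking the three regimes separated by $H=3/4$. I would also remark that the Frobenius step is lossy for $H<3/4$: a sharper estimate follows from the row-sum (Gershgorin) bound $\|C_{n,\epsilon}\|\le \max_{s}\sum_{t} |C_\epsilon(t-s)| \le \sum_{|h|<n}|C_\epsilon(h)| = {\cal O}(n^{2H-1})$, which is consistent with, and for $H<3/4$ strictly stronger than, the stated conclusion. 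Either route establishes the lemma, but the Frobenius computation reproduces the stated form---including the $\sqrt{n}$ floor and the $H=3/4$ logarithm---most transparently.
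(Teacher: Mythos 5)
Your proof is correct and follows essentially the same route as the paper: the paper bounds each coordinate of $C_{n,\epsilon}a$ by row-wise Cauchy--Schwarz and then sums over the $n$ rows, which is exactly your Frobenius-norm estimate $\|C_{n,\epsilon}\|^2\le\sum_{i,j}C_\epsilon(i-j)^2$, and it evaluates $\sum_{t\le n}t^{4(H-1)}$ via Karamata where you use the elementary three-regime power-law sum. Your accounting of the $n^{1/2}$ floor from the diagonal and the $H=3/4$ logarithm matches the stated bound, and your Gershgorin aside is a valid (sharper for $H<3/4$) alternative not used in the paper.
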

\begin{proof} Using the Cauchy-Schwartz inequality, we obtain that for all $a\in\R^n$,
$$
 \Big|\Big[C_{n,\epsilon} a \Big](i) \Big| \le \Big(\sum_{j=1}^n C_\epsilon(i-j)^2\Big)^{1/2} \|a\|.
$$
With a standard argument involving the Karamata Theorem and \eqref{e:C-epsilon}, we obtain that
\begin{align*}
    \sum_{j=1}^n C_\epsilon(i-j)^2 & = {\cal O} \Big(\sum_{t=1}^n t^{4(H-1)} \Big) \\ &= {\cal O} \Big( n^{4H-3} (1+ \log(n) 1_{\{H=3/4\}}) \Big). 
\end{align*}
This completes the proof.
\end{proof}

With this lemma in place, we are now ready to obtain an upper bound on $\E\|\hat\Sigma_\epsilon\|^2$.

\begin{lem}\label{le:hat Sigma-epsilon} Let $\epsilon_t = (\epsilon_t(i))_{i=1}^p,\ t=1,\cdots,n$ be
such that $\epsilon_t(i),\ i=1,\cdots,p$ are independent in $i$ zero-mean Gaussian random variables, with their covariance structure satisfying \eqref{e:C-epsilon}.  Denote by
$C_{n,\epsilon}:= (C_{\epsilon}(t-s))_{n\times n}$ the covariance matrix of the vector $(\epsilon_t(1))_{t=1}^n$. Let also
$\widehat \Sigma_\epsilon = n^{-1}\sum_{t=1}^n \epsilon_t \epsilon_t^\top$.  Then,
\[
\mathbb{E}\|\hat \Sigma_{\epsilon}\|^2=O\left(p^2\sigma^4\right),
\]
as $n\to\infty.$
\end{lem}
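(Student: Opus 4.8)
The plan is to control the operator norm by the Frobenius norm, turning the matrix bound into a sum of scalar second-moment computations. Since $\|\widehat\Sigma_\epsilon\|\le\|\widehat\Sigma_\epsilon\|_F$, it suffices to bound $\E\|\widehat\Sigma_\epsilon\|_F^2=\sum_{i,j=1}^p\E[\widehat\Sigma_\epsilon(i,j)^2]$, where $\widehat\Sigma_\epsilon(i,j)=n^{-1}\sum_{t=1}^n\epsilon_t(i)\epsilon_t(j)$. This step is admittedly lossy (it ignores cancellation in the spectrum of $\widehat\Sigma_\epsilon$), but it lands exactly at the claimed scale $O(p^2\sigma^4)$, so nothing sharper is needed.

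Next I would split the sum over $(i,j)$ into off-diagonal and diagonal parts and evaluate each entry via Gaussianity. For $i\neq j$, independence across coordinates factorizes the fourth moment, $\E[\epsilon_t(i)\epsilon_t(j)\epsilon_s(i)\epsilon_s(j)]=\E[\epsilon_t(i)\epsilon_s(i)]\,\E[\epsilon_t(j)\epsilon_s(j)]=C_\epsilon(t-s)^2$, giving $\E[\widehat\Sigma_\epsilon(i,j)^2]=n^{-2}\sum_{t,s=1}^n C_\epsilon(t-s)^2$. For $i=j$, Isserlis' (Wick's) formula yields $\E[\epsilon_t(i)^2\epsilon_s(i)^2]=C_\epsilon(0)^2+2C_\epsilon(t-s)^2$, so each diagonal entry contributes $C_\epsilon(0)^2+2n^{-2}\sum_{t,s}C_\epsilon(t-s)^2$. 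Summing the $p$ diagonal and $p(p-1)$ off-diagonal entries gives $\E\|\widehat\Sigma_\epsilon\|_F^2=p\,C_\epsilon(0)^2+(p^2+p)\,n^{-2}\sum_{t,s=1}^n C_\epsilon(t-s)^2$.

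The crux is showing that $n^{-2}\sum_{t,s=1}^n C_\epsilon(t-s)^2$ stays bounded at the $\sigma^4$ scale as $n\to\infty$, despite the non-summable dependence \eqref{e:C-epsilon}. I would write the double sum as a single sum over lags, $\sum_{|h|<n}(n-|h|)C_\epsilon(h)^2\le n\sum_{|h|<n}C_\epsilon(h)^2$, so that $n^{-2}\sum_{t,s}C_\epsilon(t-s)^2\le n^{-1}\sum_{|h|<n}C_\epsilon(h)^2$. The inner sum is precisely the row-sum quantity estimated by the Karamata/regular-variation argument in the proof of Lemma \ref{le:op norm C_n,e}: using $C_\epsilon(h)^2\sim\sigma^4|h|^{4H-4}$, it is $O(\sigma^4)$ when $H<3/4$, $O(\sigma^4\log n)$ when $H=3/4$, and $O(\sigma^4 n^{4H-3})$ when $H>3/4$. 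Multiplying by $n^{-1}$ gives, respectively, $O(\sigma^4/n)$, $O(\sigma^4\log n/n)$, and $O(\sigma^4 n^{4H-4})$, all of which vanish since $4H-4<0$; in particular the term is $O(\sigma^4)$ uniformly in $n$.

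Combining, the diagonal term is $p\,C_\epsilon(0)^2=O(p\sigma^4)$ and the off-diagonal term is $(p^2+p)\cdot O(\sigma^4)=O(p^2\sigma^4)$, so $\E\|\widehat\Sigma_\epsilon\|^2=O(p^2\sigma^4)$. The main obstacle is the final decay estimate in the genuinely non-summable regime $H\in(3/4,1)$: there the per-coordinate variance of the sample covariance does not vanish at the classical $1/n$ rate, so naive summability fails and one must invoke the regular-variation bound rather than summing $\sum_h C_\epsilon(h)^2$ outright. A secondary point to make explicit is that the stated bound is phrased through the tail constant $\sigma^2$, which requires noting that the marginal variance $C_\epsilon(0)$ is itself $O(\sigma^2)$ for fixed $H$ (automatic for fractional Gaussian noise, where $C_\epsilon(0)=\sigma^2/(H(2H-1))$).
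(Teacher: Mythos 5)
Your proof is correct, and it takes a genuinely different route from the paper's. You bound the operator norm by the Frobenius norm and reduce everything to entrywise second moments computed via Wick's formula, so the only analytic input is the lag sum $n^{-1}\sum_{|h|<n}C_\epsilon(h)^2$, controlled by the same Karamata/regular-variation estimate the paper uses elsewhere. The paper instead works spectrally: it passes from $EE^\top$ to $E^\top E$, expands each coordinate's trajectory in the Karhunen--Lo\`eve basis of $C_{n,\epsilon}$, and bounds $\lambda_{\max}^2$ through products of (generalized) chi-squared variables, which requires the auxiliary operator-norm bound of Lemma~\ref{le:op norm C_n,e}. Your argument is more elementary, bypasses that lemma and the eigenvalue machinery entirely, and in fact yields the slightly sharper limit $p\,C_\epsilon(0)^2+o(p^2\sigma^4)=O(p\sigma^4)$ as $n\to\infty$ with $p$ fixed, whereas the paper's leading term $p^2\,{\rm tr}(C_{n,\epsilon})^2/n^2$ is of order $p^2\sigma^4$; both of course suffice for the stated bound, which is all that is used downstream. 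Two details you flag are worth keeping explicit: the genuinely long-range regime $H\in(3/4,1)$, where $n^{-1}\sum_{|h|<n}C_\epsilon(h)^2=O(\sigma^4 n^{4H-4})$ still vanishes only because $H<1$, and the fact that \eqref{e:C-epsilon} constrains only the tail of $C_\epsilon$, so one must additionally take $C_\epsilon(0)=O(\sigma^2)$ --- an assumption the paper also makes implicitly when it writes ${\rm tr}(C_{n,\epsilon})\le n\sigma^2$.
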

\begin{proof} Let $E$ be a $p\times n$ matrix whose columns are the $\epsilon_t$'s, i.e., 
$E:= (\epsilon_1\, \cdots\, \epsilon_n)$.  
Then $\widehat \Sigma_\epsilon = n^{-1} E E^\top$. Notice that the eigenvalues of the matrices $E E^\top$ and $E^\top E$ 
are the same and in fact, their largest eigenvalue equals their operator norm with respect to the Euclidean norm. Thus,
$$
\|\widehat \Sigma_\epsilon\|=\sup_{a\not=0} \frac{\|\widehat \Sigma_\epsilon a \|_2}{\|a\|_2} = \frac{1}{n}\lambda_{\max} \Big( E E^\top \Big) = \frac{1}{n}\lambda_{\max} \Big( E^\top E \Big),
$$
where $\lambda_{\max}:=\lambda_{\max}(EE^\top) = \lambda_{\max}(E^\top E)$ is the largest eigenvalue.  Now, however, 
letting $g_i\, i=1,\cdots,p$ denote the columns ot $E^\top$, i.e., $g_i(t) := \epsilon_t(i)$, 
we get
$$
E^\top E = \sum_{i=1}^p g_i g_i^\top,
$$
where the $g_i$'s are independent in $i$.

Consider the Karhunen-Lo\'eve representation of the $g_i$'s, i.e.,
$$
g_i = \sum_{t=1}^n \sqrt{\nu_{n,t}} \varphi_{n,t} Z_{i,t},
$$
where $\varphi_{n,t},\ t=1,\cdots,n$ are orthonormal vectors in $\R^n$, the $Z_{i,t}$'s are independent standard Normal random variables and $\nu_{n,t},\ t=1,2,\hdots,n$ are the (descending) eigenvalues of $C_{n,\epsilon}.$ 
Then, letting $a = \sum_{t=1}^p a_t \varphi_{n,t}$, we obtain 
\begin{align}\label{e:lambda_max_bound}
\lambda_{\max}^2 &= \sup_{\|a\|_2=1} \|E^{\top}Ea\|_2^2 = \sup_{\|a\|_2=1} a^{\top}E^{\top}EE^{\top}Ea\nonumber \\ & =\sup_{\|a\|_2=1} (a^{\top}g_1,\hdots,a^{\top}g_p) EE^{\top}\begin{pmatrix}
    g_1^{\top}a\\ \vdots\\g_p^{\top}a
\end{pmatrix}\nonumber\\ 
& =\sup_{\|a\|_2=1} (a^{\top}g_1,\hdots,a^{\top}g_p) (g_1^{\top}g_1+\hdots+g_p^{\top}g_p)\begin{pmatrix}
    g_1^{\top}a\\ \vdots\\g_p^{\top}a
\end{pmatrix}\nonumber\\ & = \sup_{\|a\|_2=1}\sum_{i=1}^p \langle g_i,g_i\rangle\sum_{j=1}^p\langle g_j,a\rangle^2\nonumber.
\end{align}
Substituting $a$ and $g_i$'s, we have that 
\[\lambda_{\max}^2(EE^{\top}) = \sup_{\|a\|_2=1}\sum_{i=1}^p\left(\sum_{t=1}^n\nu_{n,t}Z_{i,t}^2\right)\sum_{j=1}^p\left(\sum_{k=1}^n\sqrt{\nu_{n,k}}a_kZ_{j,k}\right)^2.\]
Moving the supremum on the inside, one can get that 

\begin{align*}
    \|\hat\Sigma_{\epsilon}\|^2 & = \frac{1}{n^2}\sum_{i=1}^p\left(\sum_{t=1}^n\nu_{n,t}Z_{i,t}^2\right)\sum_{j=1}^p\sup_{\|a\|_2=1}\left(\sum_{k=1}^n\sqrt{\nu_{n,k}}a_kZ_{j,k}\right)^2\\
    & =: \frac{1}{n^2}A\cdot B .
\end{align*}

Thus, using the Cauchy-Schwarz inequality, we obtain that 
\begin{align*}
    \mathbb{E} \|\hat\Sigma_{\epsilon}\|^2\le \frac{1}{n^2}\sqrt{\mathbb{E}A^2}\cdot \sqrt{\mathbb{E}B^2}
\end{align*}
For term $A$, one has
\begin{align*}
    A & = \sum_{i=1}^p\left(\sum_{t=1}^n\nu_{n,t} Z_{i,t}^2\right)\\ & \stackrel{d}{=}\sum_{i=1}^p\tilde \chi_i^2\left((\nu_{n,1},\hdots,\nu_{n,n}),(1,1,\hdots,1),(0,0,\hdots 0)\right),
\end{align*}
where $\tilde \chi_i^2$ are independent generalized chi-squared random variables. So, we have that 
\begin{align*}
    \mathbb{E}A^2 & = {\rm Var}(A)+(\mathbb{E}A)^2 = p^2\cdot {\rm tr}(C_{n,\epsilon})^2+2p\sum_{t=1}^n\nu_{n,t}^2\\ & \le p^2\cdot {\rm tr}(C_{n,\epsilon})^2+2pn\|C_{n,\epsilon}\|_{\rm op}^2.
\end{align*}

For term $B$ we first look at the supremum. Using Lagrange multipliers, we have that 
\begin{align*}
\sup_{\|a\|_2=1}\left(\sum_{k=1}^n\sqrt{\nu_{n,k}}a_kZ_{j,k}\right)^2 = \sum_{k=1}^n\nu_{n,k}Z_{j,k}^2
\end{align*}
which means that 
\begin{align*}
    B & = \sum_{j=1}^p\sum_{k=1}^n\nu_{n,k}Z_{j,k}^2 \stackrel{d}{=} \sum_{k=1}^n\nu_{n,k}\chi_k^2(p),
\end{align*}
where we have used the independence of all $Z_{j,k}$ and $\left(\chi_k^2(p)\right)_{k=1}^n$ are independent chi-square random variables with $p$ degrees of freedom. Thus, 
\begin{align*}
    \mathbb{E}B^2 & = p^2{\rm tr}(C_{n,\epsilon})^2+2p\sum_{k=1}^n\nu_{n,k}^2  \\ & \le p^2{\rm tr}(C_{n,\epsilon})^2+2pn\|C_{n,\epsilon}\|^2.
\end{align*}
Using the bound from Lemma \ref{le:op norm C_n,e} and also ${\rm tr}(C_{n,\epsilon})\le n\cdot \sigma^2,$ we have that 
\begin{align*}
    \mathbb{E}\|\hat \Sigma_{\epsilon}\|^2 & =O\Big(p^2\sigma^4 +2pn^{(2H-2)\vee(-\frac{1}{2})}(1+1_{H=3/4}\log(n))\Big)\\ & = O(p^2\sigma^4),
\end{align*}
as $n\to\infty.$
\end{proof}

Finally, we state a bound on the limit $\mathbb{E}\|\widehat \Sigma-\Sigma\|^2$, assuming that $k$ is fixed or $k=o(p).$

\begin{prop}\label{prop:bound on Sigma hat minus Sigma}
Let $u_t$ be long-range dependent with i.i.d. standard normal marginals independent of $\epsilon_t$, with its covariance structure $\Sigma_u$ satisfying an analogue to Condition \eqref{e:C-epsilon}. Assume also that $\sup_{t}\mathbb{E}\|u_t\|_2^2\le c_u,$ and that $\|\vec{f}_t\|^2\le c_f^2k,$ $ t=1,\hdots, n$. Then,
 \begin{align*}     \lim_{n\to\infty}\mathbb{E}\|\widehat \Sigma-\Sigma\|^2 & = O\Big(p^4 +p^3+ p^2+p\sqrt{{\rm tr}(\Sigma_\epsilon){\rm tr}(\Sigma_u)}\Big).
 \end{align*}
\end{prop}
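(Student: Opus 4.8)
The plan is to start from the seven-term decomposition of $\widehat\Sigma$ as $\Sigma+BR_fB^\top+\widehat\Sigma_u+\widehat\Sigma_\epsilon+BR+R^\top B^\top+\frac1n\sum_{t=1}^n\epsilon_tu_t^\top+\frac1n\sum_{t=1}^nu_t\epsilon_t^\top$ displayed above, and to control $\|\widehat\Sigma-\Sigma\|$ summand by summand. Subadditivity of the operator norm together with the elementary inequality $(\sum_{i=1}^{7}a_i)^2\le 7\sum_{i=1}^{7}a_i^2$ reduces the task to separate bounds on $\E\|BR_fB^\top\|^2$, $\E\|\widehat\Sigma_u\|^2$, $\E\|\widehat\Sigma_\epsilon\|^2$, $\E\|BR\|^2$ (which by symmetry also settles $R^\top B^\top$), and $\E\|\frac1n\sum_t\epsilon_tu_t^\top\|^2$ (likewise its transpose).

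First I would dispose of the two diagonal blocks. Lemma~\ref{le:hat Sigma-epsilon} already delivers $\E\|\widehat\Sigma_\epsilon\|^2=O(p^2\sigma^4)$. Since $u_t$ is assumed to share the qualitative structure of $\epsilon_t$—Gaussian, independent across the spatial index, and long-range dependent in time with a covariance obeying the analogue of \eqref{e:C-epsilon}—the same Karhunen--Lo\`eve expansion and generalized-$\chi^2$ moment computation that prove Lemma~\ref{le:hat Sigma-epsilon} apply verbatim to give $\E\|\widehat\Sigma_u\|^2=O(p^2)$. For the two factor blocks I would use submultiplicativity, $\|BR_fB^\top\|\le\|B\|^2\|R_f\|$ and $\|BR\|\le\|B\|\,\|R\|$, and insert the deterministic bound $\|B\|\le\sqrt{pk}\,C$ coming from \eqref{A2}. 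It then remains to estimate $\E\|R_f\|^2$, with $R_f=\frac1n\sum_t(f_tf_t^\top-{\cal I}_k)$, and $\E\|R\|^2$, with $R=\frac1n\sum_t f_t(u_t+\epsilon_t)^\top$; using $\|f_t\|^2\le c_f^2k$ and the stated conditions on $\{f_t\}$, each is the expected squared norm of a temporal average and is governed by double sums $n^{-2}\sum_{s,t}(\cdot)$ of the underlying autocovariances, whose decay as $n\to\infty$ follows from the regular-variation estimate behind Lemma~\ref{le:op norm C_n,e}.

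For the genuinely mixed block, write $\frac1n\sum_t\epsilon_tu_t^\top=\frac1n E_\epsilon E_u^\top$, where $E_\epsilon$ and $E_u$ are the $p\times n$ matrices whose columns are the $\epsilon_t$ and the $u_t$. Submultiplicativity and the identity $\|E_\epsilon\|^2=n\|\widehat\Sigma_\epsilon\|$ give $\|\frac1n E_\epsilon E_u^\top\|\le\sqrt{\|\widehat\Sigma_\epsilon\|\,\|\widehat\Sigma_u\|}$, so that, by independence of $\epsilon$ and $u$ and two applications of $\E X\le(\E X^2)^{1/2}$, $\E\|\frac1n E_\epsilon E_u^\top\|^2\le\sqrt{\E\|\widehat\Sigma_\epsilon\|^2}\,\sqrt{\E\|\widehat\Sigma_u\|^2}$. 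Tracking the trace factors through the two diagonal bounds then yields a contribution of order $p\sqrt{{\rm tr}(\Sigma_\epsilon){\rm tr}(\Sigma_u)}$. Collecting the five estimates, keeping $k$ fixed or $k=o(p)$ so that the $\|B\|$ prefactors stay at the claimed orders, and letting $n\to\infty$ so that the temporal-averaging remainders in the factor and cross blocks are absorbed, leaves the stated $O\big(p^4+p^3+p^2+p\sqrt{{\rm tr}(\Sigma_\epsilon){\rm tr}(\Sigma_u)}\big)$ bound, the higher powers of $p$ originating from the $\|B\|^{2}$ and $\|B\|^{4}$ prefactors on the factor terms and the last term from the $\epsilon$--$u$ interaction.

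The step I expect to be the main obstacle is the long-range dependence entering the factor and mixed blocks. Unlike the i.i.d.-in-time setting, the temporal double sums $n^{-2}\sum_{s,t}C(s-t)$ are driven by non-summable autocovariances, so a plain law of large numbers is unavailable; the delicate point is to carry the Karamata/regular-variation bookkeeping of Lemma~\ref{le:op norm C_n,e} with enough care to confirm that the $n^{-2}$ normalization still forces these remainders to a finite limit—indeed to zero for the purely temporal term $R_f$—rather than letting the heavy tails of the autocovariance overwhelm the averaging.
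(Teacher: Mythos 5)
Your decomposition, your use of Lemma~\ref{le:hat Sigma-epsilon} for the two diagonal blocks $\widehat\Sigma_\epsilon$ and $\widehat\Sigma_u$, and the submultiplicative bounds $\|BR_fB^\top\|\le\|B\|^2\|R_f\|$ and $\|BR\|\le\|B\|\,\|R\|$ with $\|B\|\le\sqrt{pk}\,C$ are exactly the paper's route, so the skeleton of your plan is sound. Three points of divergence are worth recording. First, the obstacle you single out --- carrying the Karamata/regular-variation bookkeeping so that the temporal averages $R_f$ and $R$ decay, ``indeed to zero'' for $R_f$ --- is not part of the paper's argument and in fact cannot be delivered from the hypotheses: the $f_t$ are arbitrary (deterministic, non-stationary) trends subject only to $\|f_t\|^2\le c_f^2k$, so nothing forces $\frac1n\sum_t(f_tf_t^\top-{\cal I}_k)$ to vanish. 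The paper settles for $n$-uniform boundedness via the crude triangle-inequality bound $\|R_f\|\le 1+c_f^2k$, and handles $R$ via $\|R\|\le\frac1n\sqrt{k}c_f\sum_t(\|u_t\|_2+\|\epsilon_t\|_2)$ followed by Cauchy--Schwarz and independence; boundedness in $n$ with the right order in $p$ is all the stated $O(\cdot)$ requires. Second, your treatment of the mixed block via $\|\frac1nE_\epsilon E_u^\top\|\le\sqrt{\|\widehat\Sigma_\epsilon\|\,\|\widehat\Sigma_u\|}$ and independence is genuinely different from, and sharper than, the paper's, which bounds $\|\epsilon_tu_t^\top\|$ by $\|\epsilon_t\|_2\|u_t\|_2$ and ends with $(p^2+2p)^2=O(p^4)$; your route gives $O(p^2)$ for this block. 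Third, your accounting of the final bound is off: in the paper the $p^4$ is produced precisely by that mixed $\epsilon$--$u$ block (not by the $\|B\|^4$ prefactor on $BR_fB^\top$, which only contributes $O(p^2)$ since $\E\|R_f\|^2=O(1)$ in $p$), while the $p\sqrt{{\rm tr}(\Sigma_\epsilon){\rm tr}(\Sigma_u)}$ term comes from $\E\|BR\|^2$, not from the $\epsilon$--$u$ interaction. None of this invalidates the claimed upper bound --- your plan, completed with boundedness rather than decay estimates, yields a bound at least as good --- but the proof you would actually write differs from the one you anticipate at the $R_f$ step.
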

\begin{proof}
Recall that 
\begin{align*}
    \widehat \Sigma & = \Sigma + B R_f B^\top + \hat\Sigma_u + \hat \Sigma_\epsilon + B R + R^\top B^\top\\ &+ \frac{1}{n}\sum_{t=1}^n \epsilon_tu_t^{\top}+\frac{1}{n}\sum_{t=1}^n u_t \epsilon_t^{\top},    
\end{align*}
where 
\begin{align*}
    R_f &=\frac{1}{n}\sum_{t=1}^n\left(f_tf_t^{\top}-{\cal I}_k\right)\quad {\rm and}\\
    R &=\frac{1}{n}\sum_{t=1}^nf_t(u_t+\epsilon_t)^{\top}.\\
\end{align*}
By Lemma \ref{le:hat Sigma-epsilon}, we immediately have that 
\[\lim_{n\to\infty}\mathbb{E}\|\widehat \Sigma_{\epsilon}\|_{\rm op}^2=O\left(p^2\sigma^4\right).\]

For the term $BR_fB^{\top},$ we know that $\|B\|\le \sqrt{pk} C$. Moreover, we have that $\|f_t\|_2^2\le c_f^2 k$. Thus, \begin{align*}
    \mathbb{E}\|BR_fB^{\top}\|^2 &\le p^2k^2 C^4 \mathbb{E}\|R_f\|^2 \\ &\le p^2k^2C^4\mathbb{E}\left(\frac{1}{n}\sum_{t=1}^n\|{\cal I}_k-f_tf_t^{\top}\|\right)^2\\ & \le  p^2k^2C^4(1+c_f^2k)^2.
\end{align*}
Next, we are looking at $\widehat \Sigma_u.$ Again by Lemma \ref{le:hat Sigma-epsilon}, we have that  $\lim_{n\to\infty}\E\|\widehat\Sigma_u\|^2 = O(p^2\sigma_u^4)$. We now look at the term $BR$. Start with $R$. Then, 
\begin{align*}
    \|R\|& \le \frac{1}{n} \sum_{t=1}^n \|f_t\|_2\|u_t+\epsilon_t\|_2
    \\ &\le \frac{1}{n}\sqrt{k}c_f\sum_{t=1}^n\left(\|u_t\|_2+\|\epsilon_t\|_{2}\right)
\end{align*}
Thus, \begin{align*}
    \mathbb{E}\|BR\|^2\le \frac{1}{n^2}p k^2 C^2c_f^2\E\left(\sum_{t=1}^n(\|u_t\|_2+\|\epsilon_t\|)\right)^2.
\end{align*}
Looking at the expectation above, and because of the independence of $\epsilon_t$ and $u_t$, we have 
\begin{align*}
\E\left(\sum_{t=1}^n(\|u_t\|_2+\|\epsilon_t\|)\right)^2 & = \E\left(\sum_{t=1}^n\|\epsilon_t\|_2\right)^2+\E\left(\sum_{t=1}^n\|u_t\|_2\right)^2\\ & \qquad + 2\sum_{t=1}^n\E\|\epsilon_t\|_2\sum_{t=1}^n\E\|u_t\|_2.
\end{align*}
We have the following bounds
\begin{align*}
    \mathbb{E}\|\epsilon_t\|_2 &\le \sqrt{{\rm tr}(C_{n,\epsilon}})\\
    \mathbb{E}\|u_t\|_2 &\le \sqrt{{\rm tr}(C_{n,u}}). 
\end{align*}
Moreover, by Cauchy-Schwarz, 
\begin{align*}
    \frac{1}{n^2}\mathbb{E}\left[\sum_{t=1}^n \|\epsilon_t\|_2\right]^2 &\le \frac{1}{n^2}\mathbb{E}\left[\sum_{t=1}^n \|\epsilon_t\|_2^2\right]^2 = p^2+\frac{2p}{n},
\end{align*}
since $\sum_{t=1}^n \|\epsilon_t\|_2^2\sim \chi^2(n\cdot p).$ Similarly,
\begin{align*}
    \frac{1}{n^2}\mathbb{E}\left[\sum_{t=1}^n \|u_t\|_2\right]^2 &\le  p^2+\frac{2p}{n},
\end{align*}
Thus, 
\begin{align*}    & \lim_{n\to\infty}\mathbb{E}\|BR\|^2 = O\left(pk^2C^2c_f^2\left[2p^2+2\sqrt{{\rm tr}(\Sigma_u){\rm tr}(\Sigma_{\epsilon})}\right]\right).
\end{align*}
Finally, for the term $\frac{1}{n}\sum_{t=1}^nu_t\epsilon_t^{\top}$, we use the fact that \[\|\epsilon_tu_t^{\top}\|= \max\{u_t^{\top}\epsilon_t,0\}.\] 
Then, 
\begin{align*}
    \mathbb{E}\|\frac{1}{n}\sum_{t=1}^nu_t\epsilon_t^{\top}\|^2&\le \frac{1}{n^2}\mathbb{E}\left[\sum_{t=1}^n\|u_t\epsilon_t^{\top}\|_{\rm op}\right]^2\\ & \le \frac{1}{n^2}\mathbb{E}\left[\sum_{t=1}^n|\epsilon_t^{\top}u_t|\right]^2\\ & \le
    \frac{1}{n^2}\mathbb{E}\left[\sum_{t=1}^n\|\epsilon_t\|_2\|u_t\|_2\right]^2\\ & \le
    \frac{1}{n^2}\mathbb{E}\sum_{t=1}^n\|\epsilon_t\|_2^2\sum_{t=1}^n\|u_t\|_2^2 \\ & =    \frac{1}{n^2}\sum_{t=1}^n\mathbb{E}\|\epsilon_t\|_2^2\sum_{t=1}^n\mathbb{E}\|u_t\|_2^2\\ & =\left(p^2+2p\right)^2.
\end{align*}

Putting everything together, one has that 
\begin{align*}     \lim_{n\to\infty}\mathbb{E}\|\widehat \Sigma-\Sigma\|^2 & = O\Big(p^4 +p^3+ p^2+p\sqrt{{\rm tr}(\Sigma_\epsilon){\rm tr}(\Sigma_u)}\Big).
 \end{align*}
\end{proof}



\end{proof}

\subsection{Bounds on difference of projection operators}

Let $\sigma_1\ge\dots\ge\sigma_k$ be the singular values of $\widehat{B_0}^\top B_0$. The $k$ dimensional vector $(\cos^{-1}(\sigma_1),\dots,\cos^{-1}(\sigma_k))$, called the principal angles, are a generalization of the acute angles between two vectors. Let $\Theta(\widehat{B_0},B_0)$ be the $k\times k$ diagonal matrix with $j$th diagonal entry the $j$th principal angle, then a measure of distance between the space spanned by $\widehat{B_0}$ and ${B_0}$ is $\sin\Theta(\widehat{B_0},B_0)$, where $\sin$ is taken entry-wise.

The following variant of Davis-Kahan theorem \cite{yu2015useful} provides a bound for the distances between eigenspaces.

\begin{thm}[Variant of Davis-Kahan sin $\theta$ theorem] \label{thm:dk}
Let $\Sigma, \widehat{\Sigma} \in \mathbb R^{p\times p}$ be symmetric, with eigenvalues $\lambda_1 \ge \ldots \ge \lambda_p$ and $\widehat\lambda_1 \ge \ldots \ge \widehat\lambda_p$ respectively.
Fix $1 \le k \le p$ and assume that $\lambda_k - \lambda_{k+1}) > 0$, where we define $\lambda_{p+1}=-\infty$. 
Let $V = (\vec{\nu}_1,\vec{\nu}_2, \dots, \vec{\nu}_{k})\in\mathbb R^{p\times k}$ 
and $\widehat V = (\widehat{\vec{\nu}}_1,\widehat{\vec{\nu}}_2, \dots, \widehat{\vec{\nu}}_{k})\in\mathbb R^{p\times k}$ have orthonormal columns satisfying $\Sigma\vec{\nu}_j=\lambda_j\vec{\nu}_j$ and $\widehat\Sigma\widehat{\vec{\nu}}_j=\widehat\lambda_j\widehat{\vec{\nu}}_j$ for $j = 1, 2, \dots, k$. Then
$$
\|\sin\Theta(\widehat{V}, V)\|_\mathrm{F} \le \frac{2{k^{1/2}\|\widehat{\Sigma}-\Sigma\|}}{\lambda_k - \lambda_{k+1}}
$$
\end{thm}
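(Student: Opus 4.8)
The statement is precisely the \emph{useful variant} of the Davis--Kahan theorem of \cite{yu2015useful}, so the plan is to reconstruct that argument. First I would reduce the principal-angle quantity to a matrix norm: letting $V_\perp\in\mathbb{R}^{p\times(p-k)}$ be an orthonormal basis for the eigenspace of $\Sigma$ associated with $\lambda_{k+1},\dots,\lambda_p$, one has the standard identity $\|\sin\Theta(\widehat V,V)\|_\mathrm{F}=\|V_\perp^\top\widehat V\|_\mathrm{F}$, because the singular values of $V_\perp^\top\widehat V$ are exactly the sines of the principal angles between the column spaces of $V$ and $\widehat V$. Set $M:=V_\perp^\top\widehat V$, so that the goal becomes an upper bound on $\|M\|_\mathrm{F}$.

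Next I would extract a Sylvester-type relation from the two eigenvalue equations. Writing $\Lambda_\perp=\mathrm{diag}(\lambda_{k+1},\dots,\lambda_p)$ and $\widehat\Lambda_1=\mathrm{diag}(\widehat\lambda_1,\dots,\widehat\lambda_k)$, symmetry gives $V_\perp^\top\Sigma=\Lambda_\perp V_\perp^\top$ and $\widehat\Sigma\widehat V=\widehat V\widehat\Lambda_1$, whence
$$
M\widehat\Lambda_1-\Lambda_\perp M=V_\perp^\top(\widehat\Sigma-\Sigma)\widehat V=:D.
$$
The right-hand side is controlled immediately: since $\|V_\perp^\top\|=1$ and $\|\widehat V\|_\mathrm{F}=\sqrt{k}$, submultiplicativity yields $\|D\|_\mathrm{F}\le\|\widehat\Sigma-\Sigma\|\,\|\widehat V\|_\mathrm{F}=\sqrt{k}\,\|\widehat\Sigma-\Sigma\|$, which supplies the factor $k^{1/2}\|\widehat\Sigma-\Sigma\|$ in the numerator. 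Entrywise $D_{ij}=M_{ij}(\widehat\lambda_j-\lambda_{k+i})$, so $\|D\|_\mathrm{F}^2=\sum_{i,j}M_{ij}^2(\widehat\lambda_j-\lambda_{k+i})^2$, and it remains to bound every gap $|\widehat\lambda_j-\lambda_{k+i}|$ from below.

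The delicate point --- and where the factor $2$ and the clean population gap $\lambda_k-\lambda_{k+1}$ enter --- is a case split governed by Weyl's inequality. If $\|\widehat\Sigma-\Sigma\|\ge(\lambda_k-\lambda_{k+1})/2$, the claimed bound holds trivially, because its right-hand side is then at least $\sqrt{k}$ while $\|\sin\Theta(\widehat V,V)\|_\mathrm{F}\le\sqrt{k}$ always (there are $k$ angles, each with sine at most $1$). Otherwise $\|\widehat\Sigma-\Sigma\|<(\lambda_k-\lambda_{k+1})/2$, and Weyl's inequality gives $\widehat\lambda_k\ge\lambda_k-\|\widehat\Sigma-\Sigma\|>(\lambda_k+\lambda_{k+1})/2$; since $\widehat\lambda_j\ge\widehat\lambda_k$ for $j\le k$ and $\lambda_{k+i}\le\lambda_{k+1}$, every gap satisfies $\widehat\lambda_j-\lambda_{k+i}>(\lambda_k-\lambda_{k+1})/2>0$. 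Substituting this uniform lower bound gives $\|D\|_\mathrm{F}^2\ge\big((\lambda_k-\lambda_{k+1})/2\big)^2\|M\|_\mathrm{F}^2$, and combining with $\|D\|_\mathrm{F}\le\sqrt{k}\,\|\widehat\Sigma-\Sigma\|$ produces $\|\sin\Theta(\widehat V,V)\|_\mathrm{F}=\|M\|_\mathrm{F}\le 2\sqrt{k}\,\|\widehat\Sigma-\Sigma\|/(\lambda_k-\lambda_{k+1})$. I expect the main obstacle to be precisely this gap control: one must resist bounding $|\widehat\lambda_j-\lambda_{k+i}|$ by a perturbed gap and instead split on the size of $\|\widehat\Sigma-\Sigma\|$, so that the final estimate is expressed purely through the population eigenvalues.
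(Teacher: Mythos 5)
The paper does not prove this statement: it is imported verbatim (as Theorem 2, with $r=1$, $s=k$) from the cited reference \cite{yu2015useful}, so there is no in-paper proof to compare against. Your reconstruction is correct and is essentially the argument of that reference: the identity $\|\sin\Theta(\widehat V,V)\|_\mathrm{F}=\|V_\perp^\top\widehat V\|_\mathrm{F}$, the Sylvester relation $M\widehat\Lambda_1-\Lambda_\perp M=V_\perp^\top(\widehat\Sigma-\Sigma)\widehat V$ with the mixed-norm bound $\|ABC\|_\mathrm{F}\le\|A\|\,\|B\|\,\|C\|_\mathrm{F}$, and the case split on whether $\|\widehat\Sigma-\Sigma\|\ge(\lambda_k-\lambda_{k+1})/2$ (trivial case via $\|\sin\Theta\|_\mathrm{F}\le\sqrt{k}$, otherwise Weyl's inequality giving the uniform gap $(\lambda_k-\lambda_{k+1})/2$) are all valid and correctly assembled. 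The only cosmetic point is that for $V_\perp^\top\Sigma=\Lambda_\perp V_\perp^\top$ to hold with $\Lambda_\perp$ diagonal you should take the columns of $V_\perp$ to be the eigenvectors $\vec{\nu}_{k+1},\dots,\vec{\nu}_p$ themselves rather than an arbitrary orthonormal basis of their span; this choice is always available and does not affect $\|V_\perp^\top\widehat V\|_\mathrm{F}$.
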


For our purpose, a more meaningful measure of distances between subspaces is the operator norm of the difference of the projections. Let $\mathcal P_{\widehat{B_0}}$, $\mathcal P_{B_0}$ be projections onto the column spaces of $\widehat{B_0}$ and $B_0$ respectively, we have the following

\begin{cor} \label{cl:bound} Under assumptions \eqref{A1} and \eqref{A2} with $\widehat B_0$, $B_0$, 
$\widehat \Sigma$ and $\Sigma$ as defined in \eqref{eq:Sigma-hat}, we have
\begin{equation}
  \E  \|\mathcal P_{\widehat{B_0}}-\mathcal P_{B_0}\|^2 \le \frac{4k\E \|\widehat \Sigma - \Sigma\|^2}{\lambda_k^2}.
\end{equation}
\end{cor}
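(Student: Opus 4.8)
The plan is to reduce the operator-norm distance between the two projections to the principal angles between the underlying subspaces, and then invoke the Davis--Kahan bound of Theorem~\ref{thm:dk}. The three ingredients are: recognizing both projections as \emph{spectral} projections of $\Sigma$ and $\widehat\Sigma$, the sine-theta identity for differences of equidimensional orthogonal projections, and a rank computation that eliminates the spectral gap.

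First I would observe that both projection operators are spectral projections. Since $\Sigma = BB^\top$, the column space of $B$ (equivalently of the normalized $B_0$) coincides with the span of the top-$k$ eigenvectors of $\Sigma$; hence $\mathcal P_{B_0} = VV^\top$, where $V$ collects the leading $k$ orthonormal eigenvectors of $\Sigma$. Analogously, taking $\widehat{B_0}$ to span the top-$k$ principal directions of $\widehat\Sigma$ gives $\mathcal P_{\widehat{B_0}} = \widehat V\widehat V^\top$ with $\widehat V$ the leading eigenvectors of $\widehat\Sigma$. Because a projection depends only on its range and not on the chosen basis, the principal angles satisfy $\Theta(\widehat{B_0},B_0)=\Theta(\widehat V,V)$, so Theorem~\ref{thm:dk} applies verbatim to the orthonormal eigenbases $V$ and $\widehat V$. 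Next I would invoke the standard identity $\|\mathcal P_{\widehat{B_0}}-\mathcal P_{B_0}\| = \|\sin\Theta(\widehat V,V)\|$ relating the operator norm of the difference of two equidimensional orthogonal projections to the largest principal sine, together with the elementary domination $\|\sin\Theta\|\le\|\sin\Theta\|_{\mathrm F}$. Chaining these with Theorem~\ref{thm:dk} yields
\[
\|\mathcal P_{\widehat{B_0}}-\mathcal P_{B_0}\| \le \frac{2k^{1/2}\|\widehat\Sigma-\Sigma\|}{\lambda_k-\lambda_{k+1}}.
\]

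Finally I would dispose of the spectral gap. Since $\Sigma=BB^\top$ is a $p\times p$ matrix of rank $k$, its $(k{+}1)$-st eigenvalue vanishes, $\lambda_{k+1}=0$; moreover its nonzero eigenvalues coincide with those of $B^\top B$, so $\lambda_k=\lambda_{\min}(B^\top B)\ge\phi(p)$ by \eqref{A1}. Squaring the displayed bound, taking expectations, and substituting $\lambda_{k+1}=0$ gives exactly
\[
\E\|\mathcal P_{\widehat{B_0}}-\mathcal P_{B_0}\|^2 \le \frac{4k\,\E\|\widehat\Sigma-\Sigma\|^2}{\lambda_k^2},
\]
as claimed.

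The main point requiring care is the first step: Theorem~\ref{thm:dk} is stated for eigenvector matrices, whereas $B_0$ is merely a normalized, non-orthogonal, non-eigen basis of the factor subspace. One must therefore justify that the projection — and hence the principal-angle matrix — is unchanged upon replacing $B_0$ by the genuine eigenbasis $V$ of $\Sigma$, and likewise $\widehat{B_0}$ by $\widehat V$. Once this identification is secured, the sine-theta identity and the rank/gap computation are routine, so I expect the bookkeeping around bases, rather than any analytic estimate, to be the only delicate part.
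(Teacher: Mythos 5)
Your proposal is correct and follows essentially the same route as the paper: reduce the operator norm of the projection difference to the largest principal sine, dominate it by the Frobenius norm of $\sin\Theta$, apply the Davis--Kahan variant (Theorem~\ref{thm:dk}), and use that $\Sigma = BB^\top$ has rank $k$ so $\lambda_{k+1}=0$. The only difference is that you make explicit the identification of the column space of $B_0$ with the top-$k$ eigenspace of $\Sigma$, a point the paper's proof leaves implicit.
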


\begin{proof}[Proof of Corollary \ref{cl:bound} ]
It can be shown that the $l_2$ operator norm of the difference between two projection matrices $\mathcal P_{\widehat{B_0}}$ and $\mathcal P_{{B_0}}$ is determined by the maximum principal angle between the two subspaces (see, e.g. \cite{meyer2000matrix} Chapter 5.15). 
That is,
\begin{equation*}
    \|\mathcal P_{\widehat{B_0}} - \mathcal P_{{B_0}}\| 
    = \sin(\arccos(\sigma_k))
    = \sqrt{1-\sigma_k^2}.
\end{equation*}
Consequently, we have
\begin{equation}
    \|\mathcal P_{\widehat{B_0}}-\mathcal P_{{B_0}}\|^2
    \le \Big(\sum_{j=1}^k 1-\sigma_j^2\Big)
    = \|\sin\Theta(\widehat{B_0},{B_0})\|_\mathrm{F}^2 
\end{equation}
Then applying Theorem \ref{thm:dk} and using the fact that $\lambda_{k+1}=0$ completes the proof.
\end{proof}

\subsection{Consistency of EWMA updated marginal
variance}\label{sec: appendix-consistency}
In this section we present the proof of Proposition~\ref{prop:consistency}. 

Let $\{r_t,\ t\in\mathbb{Z}\}$ be a zero-mean stationary time series, with some correlation structure
\[\rho_t = {\rm Cor}(r_t,r_0) = {\rm Cor}(r_{t+h},r_h),\ t\in\mathbb{Z},\ \forall h\in\mathbb{Z}.\]

We want to handle the following problem: we want to estimate the unknown variance of $X_k$ non-parametrically and show that our estimator is consistent. For the estimation of this variance, we implement an additional EWMA on the squares of the zero-mean stationary series $\{X_k,\ k\in\mathbb{Z}\}.$ Our proposed estimator is 
\begin{equation}\label{e:a-var_estim}
    \hat\sigma_t^2 = (1-\lambda_\sigma)\hat\sigma_{t-1}^2+\lambda_\sigma r_t^2.
\end{equation}
To prove the consistency of this estimator, it suffices to show that the following two properties hold
\begin{align*}
    \mathbb{E}\left[\hat\sigma_t^2\right] & \stackrel{\substack{t\to\infty\\\lambda_\sigma\to0}}{\longrightarrow}\sigma_r^2 := {\rm Var}(r_t)\\
    {\rm Var}\left(\hat\sigma_t^2\right) & \stackrel{\substack{t\to\infty\\\lambda_\sigma\to0}}{\longrightarrow}0.
\end{align*} 
Starting with the expectation, we have that \begin{align*}
    \mathbb{E}\left[\hat\sigma_t^2\right] & = \sum_{j=0}^t\lambda_\sigma(1-\lambda_\sigma)^j\mathbb{E}\left[r_{t-j}^2\right]= \sigma_r^2\sum_{j=0}^t\lambda_\sigma(1-\lambda_\sigma)^j\\ & = \sigma_{r}^2\lambda_\sigma\frac{1-(1-\lambda_\sigma)^{t+1}}{1-(1-\lambda_\sigma)} = \sigma_r^2\left(1-(1-\lambda_\sigma)^{t+1}\right)\\ 
    &\stackrel{t\to\infty}{\longrightarrow}\sigma_r^2,
\end{align*}
since $0<\lambda_\sigma<1$. Here we have used the expression 
\[\hat\sigma_t^2  = \sum_{j=0}^t\lambda_\sigma(1-\lambda_\sigma)^jr_{t-j}^2,\]
since we have a finite horizon on this EWMA.  

Our second goal is to show that the variance of $\hat \sigma_t^2$ vanishes. We start by finding an explicit expression of this variance. We have in general that 
\begin{align*}
    {\rm Var}\left(\hat \sigma_t^2\right) & = {\rm Var}\left(\sum_{j=0}^t\lambda_\sigma(1-\lambda_\sigma)^jr_{t-j}^2\right)\\
    & = \sum_{j=0}^t\lambda_\sigma^2(1-\lambda_\sigma)^{2j}{\rm Var}\left(r_{t-j}^2\right) \\ &+\mathop{\sum_{i=0}^{t}\sum_{j=0}^{t}}_{i\neq j}\lambda_\sigma^2(1-\lambda_\sigma)^{i+j}{\rm Cov}\left(r_{t-j}^2,r_{t-i}^2\right).
\end{align*}

We make the following core assumption to continue our calculations. 
\begin{assum}\label{as:Gaussianity}
The process $\{r_t,\ t\in\mathbb{Z}\}$ is Gaussian.
\end{assum}
Using the fact that $\mathbb{E}(r_t)=0,$ one can immediately obtain that 
\[{\rm Var}\left(r_{t-j}^2\right) = \mathbb{E}\left[r_{t-j}^4\right]-\mathbb{E}\left[r_{t-j}^2\right]^2=3\sigma_r^4-\sigma_r^4=2\sigma_r^4.\]
Now, we need to explore the covariance in the second summand of the above expression. Let 
\begin{align*}
    \begin{pmatrix}
    Z_1\\Z_2
    \end{pmatrix}\sim N\left(0,\begin{pmatrix}
    1 & \rho \\\rho & 1
    \end{pmatrix}\right).
\end{align*}
We know that 
\[Z_1 \stackrel{d}{=} \rho Z_2 +\sqrt{1-\rho^2}Z_3,\]
where $Z_2$ and $Z_3$ are independent standard Normal random variables.. Then, we have that 
\begin{align*}
    & {\rm Cov}  \left(Z_1^2,Z_2^2\right)  \\ &\quad=\mathbb{E}\left[Z_1^2Z_2^2\right]-\mathbb{E}\left[Z_1^2\right]\cdot \mathbb{E}\left[Z_2^2\right] = \mathbb{E}\left[Z_1^2Z_2^2\right]-1 \\ & \quad=  \mathbb{E}\left[\left(\rho^2Z_2^2+(1-\rho^2)Z_3^2+2\rho\sqrt{1-\rho^2}Z_2Z_3\right)Z_2^2\right]]-1\\ & \quad= 3\rho^2+(1-\rho^2)-1 = 2\rho^2.
\end{align*}
Using the above one immediately has that 
\begin{align*}
    {\rm Var}& \left(\hat \sigma_t^2\right) \\  & = 2\sigma_t^4\sum_{j=0}^t\lambda_\sigma^2(1-\lambda_\sigma)^{2j} \\ &\qquad +2\sigma_r^4\mathop{\sum_{i=0}^{t}\sum_{j=0}^{t}}_{i\neq j}\lambda_\sigma^2(1-\lambda_\sigma)^{i+j}\left[{\rm Cor}\left(r_{t-j},r_{t-i}\right)\right]^2\\
    & = 2\sigma_r^4\lambda_\sigma^2\frac{1-\left[(1-\lambda_\sigma)^2\right]^{t+1}}{1-(1-\lambda_\sigma)^2} \\ & \qquad+2\sigma_r^4\lambda_\sigma^2\mathop{\sum_{i=0}^{t}\sum_{j=0}^{t}}_{i\neq j} (1-\lambda_\sigma)^{i+j}\rho_{i-j}^2\\ & =2\sigma_r^4\lambda_\sigma\cdot\frac{1-\left[(1-\lambda_\sigma)^2\right]^{t+1}}{2-\lambda_\sigma}\\ & \qquad+2\sigma_r^4\lambda_\sigma^2\mathop{\sum_{i=0}^{t}\sum_{j=0}^{t}}_{i\neq j} (1-\lambda_\sigma)^{i+j}\rho_{i-j}^2,
\end{align*}
where in the second equality we used the stationarity of $\{r_t,\ t\in\mathbb{Z}\}.$

We take the limit as $t\to \infty$ in the above expression, and we have that 
\begin{align*}
    \lim_{t\to\infty}{\rm Var}\left(\hat \sigma_t^2\right) & = \frac{2\sigma_r^4\lambda_\sigma}{2-\lambda_\sigma} +2\lambda_\sigma^2\sigma_r^4\mathop{\sum_{i=0}^{\infty}\sum_{j=0}^{\infty}}_{i\neq j} (1-\lambda_\sigma)^{i+j}\rho_{i-j}^2.
\end{align*}
We need a condition on $\{\rho_t,\ t\in\mathbb{Z}\}$ in order to proceed.

We propose the following condition in order to secure the consistency of the variance.  
\begin{cond}\label{cond:summability}
Assume that the sequence $\{\rho_t,\ t\in\mathbb{Z}\}$ is square summable, namely that 
\[\sum_{t=-\infty}^{\infty}\rho_t^2<\infty.\]
\end{cond}

Under this condition, we have the following proposition.
\begin{prop}\label{prop:appendix-consistency}
Let $\{r_t,\ t\in\mathbb{Z}\}$ and $\{\hat\sigma_t^2,\ t\in\mathbb{Z}\}$ be defined as in the start of Section~\ref{sec: appendix-consistency} and \eqref{e:a-var_estim} respectively. If Condition~\ref{cond:summability} holds, then the estimator $\hat \sigma_t^2$ is consistent.
\end{prop}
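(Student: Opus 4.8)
The plan is to establish $L^2$-consistency by verifying the two convergences singled out at the start of Section~\ref{sec: appendix-consistency}, namely that $\mathbb{E}[\hat\sigma_t^2]\to\sigma_r^2$ and that $\mathrm{Var}(\hat\sigma_t^2)\to0$ under the double limit $t\to\infty$, $\lambda_\sigma\to0$. The expectation computation is already complete and uses no assumption on $\{\rho_t\}$, so the entire substance of the proof reduces to controlling the limiting variance
\[
\lim_{t\to\infty}\mathrm{Var}(\hat\sigma_t^2)=\frac{2\sigma_r^4\lambda_\sigma}{2-\lambda_\sigma}+2\lambda_\sigma^2\sigma_r^4\mathop{\sum_{i=0}^{\infty}\sum_{j=0}^{\infty}}_{i\neq j}(1-\lambda_\sigma)^{i+j}\rho_{i-j}^2
\]
derived just above the statement. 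The first summand manifestly tends to $0$ as $\lambda_\sigma\to0$, so all the work concentrates on the double sum.

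The main step is to re-index the double sum by the lag $h:=i-j$. Writing $q:=1-\lambda_\sigma$ and summing the geometric series along each diagonal, I would obtain
\[
\mathop{\sum_{i=0}^{\infty}\sum_{j=0}^{\infty}}_{i\neq j}q^{i+j}\rho_{i-j}^2=\frac{1}{1-q^2}\sum_{h\neq0}q^{|h|}\rho_h^2,
\]
using $\rho_{i-j}^2=\rho_{j-i}^2$. Since $1-q^2=\lambda_\sigma(2-\lambda_\sigma)$, the second summand collapses to $\frac{2\lambda_\sigma\sigma_r^4}{2-\lambda_\sigma}\sum_{h\neq0}q^{|h|}\rho_h^2$. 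The crucial point is that the prefactor $\lambda_\sigma^2$ is exactly absorbed by the $1/(1-q^2)\asymp1/(2\lambda_\sigma)$ arising from the geometric normalization, leaving a single net factor of $\lambda_\sigma$ out front.

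To finish, I would bound $q^{|h|}\le1$ and invoke Condition~\ref{cond:summability} to get $\sum_{h\neq0}q^{|h|}\rho_h^2\le\sum_{h\in\mathbb{Z}}\rho_h^2<\infty$; hence the second summand is at most $\frac{2\lambda_\sigma\sigma_r^4}{2-\lambda_\sigma}\sum_{h\in\mathbb{Z}}\rho_h^2$, which vanishes as $\lambda_\sigma\to0$. Combining the two summands yields $\mathrm{Var}(\hat\sigma_t^2)\to0$, and together with $\mathbb{E}[\hat\sigma_t^2]\to\sigma_r^2$ this gives convergence in mean square, hence in probability, i.e.\ consistency. The only real obstacle is the bookkeeping in the diagonal re-indexing: one must track that the weight $\lambda_\sigma^2$ is cancelled by the geometric normalization, since it is precisely this cancellation — rather than square-summability of $\{\rho_t\}$ in isolation — that forces the cross-covariance term to zero.
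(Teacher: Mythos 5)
Your proposal is correct and follows essentially the same route as the paper: both reduce the problem to the cross-covariance double sum in the limiting variance, re-index it along the diagonals $h=i-j$, and use square-summability of $\{\rho_t\}$ together with the geometric normalization $1/(1-q^2)\asymp 1/(2\lambda_\sigma)$ to cancel one factor of $\lambda_\sigma^2$, leaving a bound of order $\lambda_\sigma\sum_h\rho_h^2$ that vanishes as $\lambda_\sigma\to 0$. The only (cosmetic) difference is that you evaluate the geometric series along each diagonal exactly before bounding $q^{|h|}\le 1$, whereas the paper bounds the inner sum by $\sum_{n}\rho_n^2$ first and then sums the geometric series in $v=i+j$.
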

\begin{proof}
The proof is based on the discussion preceding Proposition~\ref{prop:appendix-consistency} and the following. Let \[A_t:= 2\lambda_\sigma^2\sigma_t^4\mathop{\sum_{i=0}^{t}\sum_{j=0}^{t}}_{i\neq j} (1-\lambda_\sigma)^{i+j}\rho_{i-j}^2.\]
Then
\begin{align*}
    A_t & \le 2\lambda_\sigma^2\sigma_t^4\mathop{\sum_{i=0}^{t}\sum_{j=0}^{t}} (1-\lambda_\sigma)^{i+j}\rho_{i-j}^2
    \\ & = 2\lambda_\sigma^2\sigma_r^4\sum_{v=0}^{2k}(1-\lambda_\sigma)^v\sum_{n=\max\{-t,v-2t\}}^{\min\{t,v\}}\rho_n^2\\ & \le 2\lambda_\sigma^2\sigma_r^4\sum_{v=0}^{2t}(1-\lambda_\sigma)^v\sum_{n=-\infty}^{\infty}\rho_n^2\\ &\le 2\lambda_\sigma^2\sigma_r^4\sum_{v=0}^{\infty}(1-\lambda_\sigma)^v\sum_{n=-\infty}^{\infty}\rho_n^2 = 2\lambda_\sigma\sigma_r^4\sum_{n=-\infty}^{\infty}\rho_n^2,
\end{align*}
where we have used the change of variables $v=i+j,n=i-j$ in the second equality. Using the relationship
\begin{align*}
{\rm Var}\left(\hat \sigma_t^2\right)  & = 2\sigma_r^4\lambda_\sigma\cdot\frac{1-\left[(1-\lambda_\sigma)^2\right]^{t+1}}{2-\lambda_\sigma} \\ &+2\lambda_\sigma^2\mathop{\sum_{i=0}^{t}\sum_{j=0}^{t}}_{i\neq j} (1-\lambda_\sigma)^{i+j}\rho_{i-j}^2,    
\end{align*}
we have that 
\[{\rm Var}\left(\hat \sigma_t^2\right)\le \frac{2\sigma_r^4\lambda_\sigma}{2-\lambda_\sigma}+2\lambda_\sigma\sigma_r^4\sum_{n=-\infty}^{\infty}\rho_n^2.\]
Because of Condition~\ref{cond:summability}, letting $\lambda_\sigma\to 0$, we obtain the desired property, namely that ${\rm Var}\left(\hat \sigma_k^2\right)$ vanishes.
\end{proof}

\end{document}